\newtheorem*{theorem}{Theorem}
\newlength{\myl}
\newlength{\myh}
\newlength{\myd}
\newcounter{aaa}
\tikzset{
  apply/.style args={#1 except on segments #2}{postaction={
      /utils/exec={
        \@for\mattempa:=#2\do{\csdef{aaa@\mattempa}{}}
        \setcounter{aaa}{0}
      },
      decorate,decoration={show path construction,
        moveto code={},
        lineto code={
          \stepcounter{aaa}
          \ifcsdef{aaa@\theaaa}{}{
            \path[#1] (\tikzinputsegmentfirst) -- (\tikzinputsegmentlast);
          }
        },
        curveto code={
          \stepcounter{aaa}
          \ifcsdef{aaa@\theaaa}{}{
            \path [#1] (\tikzinputsegmentfirst) .. controls
            (\tikzinputsegmentsupporta) and (\tikzinputsegmentsupportb)
            ..(\tikzinputsegmentlast);
          }
        },
        closepath code={
          \stepcounter{aaa}
          \ifcsdef{aaa@\theaaa}{}{
            \path [#1] (\tikzinputsegmentfirst) -- (\tikzinputsegmentlast);
          }
        },
      },
    },
  },
}
\newcommand*{\IfInList}[2]{%
	\gdef\memory{0}
	\edef\arg{#1}
	 \foreach \q in #2 {%
	 	\ifthenelse{\q=\arg}{%
	 		\gdef\memory{1}
	 	}{}
	 }
	\ifthenelse{\memory=1} {%
		\expandafter\@firstoftwo
	}{%
		\expandafter\@secondoftwo
	}
}
  \def\@temp{\tikzcd@[#1]\BODY}%
\def\temp{&} \catcode`&=\active \let&=\temp
\protected\def\vvv#1{\leavevmode\bgroup\vbox\bgroup\xvvv#1\relax}
\def\xvvv{\afterassignment\xxvvv\let\tmp= }
\def\xxvvv{%
\ifx\tmp\relax\egroup\egroup\let\next\relax
 \else
\hbox to 1.1em{\hfill\tmp\hfill}
\let\next\xvvv\fi
\next}
\long\def\ifnodedefined#1#2#3{%
    \@ifundefined{pgf@sh@ns@#1}{#3}{#2}%
}
\pgfmathsetmacro\MathAxis{height("$\vcenter{}$")}
\DeclareExpandableDocumentCommand{\gate}{O{}O{1.5pt}O{1.5pt}m}{
	|[inner sep=4pt,minimum width=#2,minimum height=#3]|%
 	\edef\n{\the\pgfmatrixcurrentrow} 
 	\edef\m{\the\pgfmatrixcurrentcolumn} 
 	\edef\options{row=\n,col=\m,#1}
 	\def\toswap{0}%
 	\def\DisableMinSize{0}%
 	\pgfkeys{/quantikz,wires=1,style=,label style=,braces=}%
  	\pgfkeys{/quantikz,#1}%
 	\pgfkeysgetvalue{/quantikz/wires}{\quantwires}
 	\pgfkeysgetvalue{/quantikz/style}{\a}
 	\pgfkeysgetvalue{/quantikz/label style}{\b}
 	\pgfkeysgetvalue{/quantikz/cwires}{\mylist}
 	\pgfkeysgetvalue{/quantikz/nwires}{\nowires}
 	\pgfkeysgetvalue{/quantikz/bundle}{\bundle}
 	\ifthenelse{\toswap=1}{
 		\def\quantwires{2}
 		\phantom{wide}
 		\settowidth{\myl}{$wide$}
 		\settoheight{\myh}{$wide$}
 		\settodepth{\myd}{$wide$}
 	}{
 		\settowidth{\myl}{$#4$}
 		\ifthenelse{\DisableMinSize=1}{
 			\phantom{U}
 			\settoheight{\myh}{$U$}
 			\settodepth{\myd}{$U$}
 		}{
 			\phantom{#4}
 			\settoheight{\myh}{$#4$}
 			\settodepth{\myd}{$#4$}
 		}
 	}
 	\IfInList{1}{\mylist}{\cw}{\IfInList{1}{\nowires}{}{\IfInList{1}{\bundle}{\qwbundle[alternate]{}}{\qw}}}
 	\edef\k{\the\numexpr\n+\quantwires-1\relax}
 	\edef\mn{\the\numexpr\m-1\relax}
 	\ifthenelse{\quantwires=1}{}{
 	\foreach \i in {\the\numexpr\n+1\relax,...,\k} {
 	    \edef\newcom{\noexpand\vcwhexplicit{\i-\m}{\i-\mn}}
       	\edef\newcomb{\noexpand\vqwexplicit{\i-\m}{\i-\mn}}
       	\edef\newcomc{\noexpand\vqbundleexplicit{\i-\m}{\i-\mn}}
 	 	\edef\val{\the\numexpr\i+1-\n\relax}
 		\IfInList{\val}{\mylist}{\newcom}{\IfInList{\val}{\nowires}{}{\IfInList{\val}{\bundle}{\newcomc}{\newcomb}}}
 		\globaldefs=1
 		\edef\dotikzset{\noexpand\tikzset{row \i\space column \m/.append style={minimum width={max(\the\myl+8pt,#2)}}}}%
 		\dotikzset%
 		\edef\undotikzset{\noexpand\tikzset{row \i\space column \m/.style={}}}%
      	\expandafter\pgfutil@g@addto@macro\expandafter\tikzcd@atendglobals\expandafter{\undotikzset}%
 	}
 		   \globaldefs=1%
 		   \edef\dotikzset{\noexpand\tikzset{row \k\space column \m/.append style={minimum height={max(\the\myh+\the\myd+8pt,#3)}}}}%
 		   \dotikzset%
 		   \globaldefs=0%
 	}
	\expandafter\expandafter\expandafter\expandafter\expandafter\expandafter\expandafter\pgfutil@g@addto@macro\expandafter\expandafter\expandafter\expandafter\expandafter\expandafter\expandafter\tikzcd@atendsavedpaths\expandafter\expandafter\expandafter\expandafter\expandafter\expandafter\expandafter{%
		\expandafter\expandafter\expandafter\expandafter\expandafter\expandafter\expandafter\gate@end\expandafter\expandafter\expandafter\expandafter\expandafter\expandafter\expandafter{\expandafter\expandafter\expandafter\a\expandafter\expandafter\expandafter}\expandafter\expandafter\expandafter{\expandafter\b\expandafter}\expandafter{\options}{#4}
	}
}
\newcommand{\gate@end}[4]{
	\pgfkeys{/quantikz,wires=1}
	\def\toswap{0}
	\def\DisableMinSize{0}
 	\pgfkeys{/quantikz,#3}
	\pgfkeysgetvalue{/quantikz/wires}{\quantwires}
	\pgfkeysgetvalue{/quantikz/row}{\row}
	\pgfkeysgetvalue{/quantikz/col}{\col}
	\ifthenelse{\toswap=1}{\def\quantwires{2}}{}
	\xdef\LoopGG{}
	\foreach \n in  {\row,...,\the\numexpr\row+\quantwires-1\relax} {
	\ifnodedefined{\tikzcdmatrixname-\n-\col}{
		\xdef\LoopGG{\LoopGG(\tikzcdmatrixname-\n-\col)}
		}{}
	}
	\ifthenelse{\toswap=1}{
		\node (group\tikzcdmatrixname-\row-\col) [fit=\LoopGG,operator,inner sep=0pt,#1] {\hphantom{Wide}};
		\draw [thickness] (group\tikzcdmatrixname-\row-\col.west|-\tikzcdmatrixname-\row-\col.center) to[out=0,in=180] (group\tikzcdmatrixname-\row-\col.east|-\tikzcdmatrixname-\the\numexpr\row+1\relax-\col.center);
		\draw [line width=3pt,white,shorten >=0.9pt,shorten <=0.9pt] (group\tikzcdmatrixname-\row-\col.east|-\tikzcdmatrixname-\row-\col.center) to[out=180,in=0] (group\tikzcdmatrixname-\row-\col.west|-\tikzcdmatrixname-\the\numexpr\row+1\relax-\col.center);
		\draw [thickness] (group\tikzcdmatrixname-\row-\col.east|-\tikzcdmatrixname-\row-\col.center) to[out=180,in=0] (group\tikzcdmatrixname-\row-\col.west|-\tikzcdmatrixname-\the\numexpr\row+1\relax-\col.center);
	}{
	  \node (group\tikzcdmatrixname-\row-\col) [fit=\LoopGG,operator,inner sep=0pt,label={[gg label,#2]$#4$},#1] {\hphantom{$#4$}};
	}
}
\DeclareExpandableDocumentCommand{\ghost}{O{}O{0pt}O{1.5pt}m}{
	|[inner ysep=4pt,minimum width=#2,minimum height=#3]| \vphantom{#4}
}
\newcommand\slice[2][]{%
	\pgfkeys{/quantikz,wires=1,style=,label style=,braces=}
	\pgfkeys{/quantikz,#1}%
 	\edef\options{\pgfkeysvalueof{/quantikz/style}}
 	\edef\opts{\pgfkeysvalueof{/quantikz/label style}}
	\edef\n{\the\pgfmatrixcurrentcolumn}
	\expandafter\expandafter\expandafter\expandafter\expandafter\expandafter\expandafter\pgfutil@g@addto@macro\expandafter\expandafter\expandafter\expandafter\expandafter\expandafter\expandafter\tikzcd@atendslices\expandafter\expandafter\expandafter\expandafter\expandafter\expandafter\expandafter{%
		\expandafter\expandafter\expandafter\expandafter\expandafter\expandafter\expandafter\slice@end\expandafter\expandafter\expandafter\expandafter\expandafter\expandafter\expandafter{\expandafter\expandafter\expandafter\n\expandafter\expandafter\expandafter}\expandafter\expandafter\expandafter{\expandafter\options\expandafter}\expandafter{\opts}{#2}%
	}}
\newcommand{\slice@end}[4]{
	\edef\top{($1/2*(\tikzcdmatrixname-col#1.east |- \tikzcdmatrixname-row1.north)+1/2*(\tikzcdmatrixname-col\the\numexpr#1+1\relax.west |- \tikzcdmatrixname-row1.north)$)}
	\edef\bottom{($1/2*(\tikzcdmatrixname-col#1.east |- \tikzcdmatrixname-row\the\pgfmatrixcurrentrow.south)+1/2*(\tikzcdmatrixname-col\the\numexpr#1+1\relax.west |- \tikzcdmatrixname-row\the\pgfmatrixcurrentrow.south)+(0,-3pt)$)}
	\expandafter\expandafter\expandafter\make@slice\expandafter\expandafter\expandafter{\expandafter\top\expandafter}\expandafter{\bottom}{#4}{#2}{#3}
}
\newcommand{\make@slice}[5]{
	\draw[slice,#4] #1 to node[pos=0,inner sep=4pt,anchor=south,color=black,#5] {#3} #2;
}
\newcommand{\sliceallr}{
	\edef\sstyle{\pgfkeysvalueof{/tikz/slice style}}
	\edef\slstyle{\pgfkeysvalueof{/tikz/slice label style}}
	\foreach \n in  {2,...,\the\numexpr\pgfmatrixcurrentcolumn-1-\pgfkeysvalueof{/tikz/remove end slices}\relax} {
		\edef\col{\the\numexpr\n-1\relax}
		\edef\title{\pgfkeysvalueof{/tikz/slice titles}}
		\expandafter\expandafter\expandafter\expandafter\expandafter\expandafter\expandafter\slice@end\expandafter\expandafter\expandafter\expandafter\expandafter\expandafter\expandafter{\expandafter\expandafter\expandafter\n\expandafter\expandafter\expandafter}\expandafter\expandafter\expandafter{\expandafter\sstyle\expandafter}\expandafter{\slstyle}{\title}
	}
}
\newcommand{\sliceallvr}{
	\edef\sstyle{\pgfkeysvalueof{/tikz/slice style}}
	\edef\slstyle{\pgfkeysvalueof{/tikz/slice label style}}
	\foreach \n in  {2,...,\the\numexpr\pgfmatrixcurrentcolumn-1-\pgfkeysvalueof{/tikz/remove end slices}\relax} {
		\edef\col{\the\numexpr\n-1\relax}
		\edef\title{\vvv{\pgfkeysvalueof{/tikz/slice titles}}}
		\expandafter\expandafter\expandafter\expandafter\expandafter\expandafter\expandafter\slice@end\expandafter\expandafter\expandafter\expandafter\expandafter\expandafter\expandafter{\expandafter\expandafter\expandafter\n\expandafter\expandafter\expandafter}\expandafter\expandafter\expandafter{\expandafter\sstyle\expandafter}\expandafter{\slstyle}{\title}
	}
}
\newcommand\lstick[2][]{%
	\pgfkeys{/quantikz,wires=1,style=,label style=,braces=}
	\pgfkeys{/quantikz,#1}%
	\edef\newoptions{row=\the\pgfmatrixcurrentrow,col=\the\pgfmatrixcurrentcolumn,#1}
	\pgfkeysgetvalue{/quantikz/label style}{\options}
	\pgfkeysgetvalue{/quantikz/braces}{\opts}
	\expandafter\expandafter\expandafter\expandafter\expandafter\expandafter\expandafter\pgfutil@g@addto@macro\expandafter\expandafter\expandafter\expandafter\expandafter\expandafter\expandafter\tikzcd@atendsavedpaths\expandafter\expandafter\expandafter\expandafter\expandafter\expandafter\expandafter{%
		\expandafter\expandafter\expandafter\expandafter\expandafter\expandafter\expandafter\groupinput@end\expandafter\expandafter\expandafter\expandafter\expandafter\expandafter\expandafter{\expandafter\expandafter\expandafter\newoptions\expandafter\expandafter\expandafter}\expandafter\expandafter\expandafter{\expandafter\options\expandafter}\expandafter{\opts}{#2}%
	}
}
\newcommand\rstick[2][]{%
	\pgfkeys{/quantikz,wires=1,style=,label style=,braces=}
	\pgfkeys{/quantikz,#1}%
	\edef\newoptions{row=\the\pgfmatrixcurrentrow,col=\the\pgfmatrixcurrentcolumn,#1}
	\pgfkeysgetvalue{/quantikz/label style}{\options}
	\pgfkeysgetvalue{/quantikz/braces}{\opts}
	\expandafter\expandafter\expandafter\expandafter\expandafter\expandafter\expandafter\pgfutil@g@addto@macro\expandafter\expandafter\expandafter\expandafter\expandafter\expandafter\expandafter\tikzcd@atendsavedpaths\expandafter\expandafter\expandafter\expandafter\expandafter\expandafter\expandafter{%
		\expandafter\expandafter\expandafter\expandafter\expandafter\expandafter\expandafter\groupoutput@end\expandafter\expandafter\expandafter\expandafter\expandafter\expandafter\expandafter{\expandafter\expandafter\expandafter\newoptions\expandafter\expandafter\expandafter}\expandafter\expandafter\expandafter{\expandafter\options\expandafter}\expandafter{\opts}{#2}%
	}
}
\newcommand\midstick[2][]{%
	\hphantom{\text{gg#2gg}}\ 
	\pgfkeys{/quantikz,wires=1,style=,label style=,braces=}
	\pgfkeys{/quantikz,#1}%
	\edef\newoptions{row=\the\pgfmatrixcurrentrow,col=\the\pgfmatrixcurrentcolumn,#1}
	\pgfkeysgetvalue{/quantikz/wires}{\quantwires}
	\pgfkeysgetvalue{/quantikz/label style}{\options}
	\pgfkeysgetvalue{/quantikz/braces}{\opts}
	 \edef\n{\the\pgfmatrixcurrentrow} 
 	\edef\m{\the\pgfmatrixcurrentcolumn} 
	\edef\mn{\the\numexpr\m-1\relax}
	\edef\k{\the\numexpr\n+\quantwires-1\relax}
	\settowidth{\myl}{gg#2gg}
	\ifthenelse{\quantwires=1}{}{
	 	\foreach \i in {\the\numexpr\n+1\relax,...,\k} {
 	 	\edef\val{\the\numexpr\i+1-\n\relax}
 		\globaldefs=1
 		\edef\dotikzset{\noexpand\tikzset{row \i\space column \m/.append style={minimum width={\the\myl}}}}
 		\dotikzset
 		\edef\undotikzset{\noexpand\tikzset{row \i\space column \m/.style={}}}
      	\expandafter\pgfutil@g@addto@macro\expandafter\tikzcd@atendglobals\expandafter{\undotikzset}
      	\globaldefs=0
 	}
 	}
	\expandafter\expandafter\expandafter\expandafter\expandafter\expandafter\expandafter\pgfutil@g@addto@macro\expandafter\expandafter\expandafter\expandafter\expandafter\expandafter\expandafter\tikzcd@atendsavedpaths\expandafter\expandafter\expandafter\expandafter\expandafter\expandafter\expandafter{%
		\expandafter\expandafter\expandafter\expandafter\expandafter\expandafter\expandafter\groupoutput@mid\expandafter\expandafter\expandafter\expandafter\expandafter\expandafter\expandafter{\expandafter\expandafter\expandafter\newoptions\expandafter\expandafter\expandafter}\expandafter\expandafter\expandafter{\expandafter\options\expandafter}\expandafter{\opts}{#2}%
	}
}
\newcommand{\groupinput@end}[4]{
	\pgfkeys{/quantikz,wires=1}
 	\pgfkeys{/quantikz,#1}%
	\pgfkeysgetvalue{/quantikz/wires}{\quantwires}
	\pgfkeysgetvalue{/quantikz/row}{\row}
	\pgfkeysgetvalue{/quantikz/col}{\col}
	\xdef\LoopGI{}
	\foreach \n in  {\row,...,\the\numexpr\row+\quantwires-1\relax} {
	\ifnodedefined{\tikzcdmatrixname-\n-\col}{
		\xdef\LoopGI{\LoopGI(\tikzcdmatrixname-\n-\col)} 
		}{}
		}
	\ifthenelse{\quantwires=1} {
		\node (ingr-\row) [fit=\LoopGI, inner sep=0pt,label={[align=center,#2]left:#4}] {};
	}{
	\node (ingr-\row) [fit=\LoopGI, inner sep=0pt] {};
	\draw[dm,#3] ($(ingr-\row.north west)+(-0.1cm,0.1cm)$) to node[midway,align=center,anchor=east,xshift=-0.1cm,#2] {#4} ($(ingr-\row.south west)+(-0.1cm,-0.1cm)$);
	}
} %
\newcommand{\groupoutput@end}[4]{
	\pgfkeys{/quantikz,wires=1}
 	\pgfkeys{/quantikz,#1}%
	\pgfkeysgetvalue{/quantikz/wires}{\quantwires}
	\pgfkeysgetvalue{/quantikz/row}{\row}
	\pgfkeysgetvalue{/quantikz/col}{\col}
	\xdef\LoopGO{}
	\foreach \n in  {\row,...,\the\numexpr\row+\quantwires-1\relax} {
		\ifnodedefined{\tikzcdmatrixname-\n-\col}{
			\xdef\LoopGO{\LoopGO(\tikzcdmatrixname-\n-\col)} 
		}}
		\ifthenelse{\quantwires=1} {
		\node (outgr-\row) [fit=\LoopGO, inner sep=0pt,label={[align=center,#2]right:#4}] {};
	}{
	\node (outgr-\row) [fit=\LoopGO, inner sep=0pt] {};
	\draw[dd,#3] ($(outgr-\row.north east)+(0.1cm,0.1cm)$) to node[midway,align=center,anchor=west,xshift=0.1cm,#2] {#4} ($(outgr-\row.south east)+(0.1cm,-0.1cm)$);
	}
} %
\newcommand{\groupoutput@mid}[4]{
	\pgfkeys{/quantikz,wires=1}
	\def\leftbrace{1}\def\rightbrace{1}
 	\pgfkeys{/quantikz,#1}%
	\pgfkeysgetvalue{/quantikz/wires}{\quantwires}
	\pgfkeysgetvalue{/quantikz/row}{\row}
	\pgfkeysgetvalue{/quantikz/col}{\col}
	\xdef\LoopGO{}
	\foreach \n in  {\row,...,\the\numexpr\row+\quantwires-1\relax} {
		\ifnodedefined{\tikzcdmatrixname-\n-\col}{
			\xdef\LoopGO{\LoopGO(\tikzcdmatrixname-\n-\col)} 
		}}
		\ifthenelse{\quantwires=1} {
		\node (midgr-\row-\col) [fit=\LoopGO, inner sep=0pt,label={[align=center,#2]#4}] {};
	}{
	\node (midgr-\row-\col) [fit=\LoopGO, inner sep=0pt,label={[anchor=mid,#2]center:#4}] {};
	\ifthenelse{\rightbrace=1}{
	\draw[dm,#3] ($(midgr-\row-\col.north east)+(-0.1cm,0.05cm)$) to ($(midgr-\row-\col.south east)+(-0.1cm,-0.05cm)$);
	}{}
	\ifthenelse{\leftbrace=1}{
	\draw[dd,#3] ($(midgr-\row-\col.north west)+(0.1cm,0.05cm)$) to ($(midgr-\row-\col.south west)+(0.1cm,-0.05cm)$);
	}{}
	}
} %
\newcommand\gateinput[2][]{%
	\pgfkeys{/quantikz,wires=1,style=,label style=,braces=}%
	\pgfkeys{/quantikz,#1}%
	\edef\newoptions{row=\the\pgfmatrixcurrentrow,col=\the\pgfmatrixcurrentcolumn,#1}
 	\pgfkeysgetvalue{/quantikz/label style}{\options}
 	\pgfkeysgetvalue{/quantikz/braces}{\opts}%
	\expandafter\expandafter\expandafter\expandafter\expandafter\expandafter\expandafter\pgfutil@g@addto@macro\expandafter\expandafter\expandafter\expandafter\expandafter\expandafter\expandafter\tikzcd@atendlabels\expandafter\expandafter\expandafter\expandafter\expandafter\expandafter\expandafter{%
		\expandafter\expandafter\expandafter\expandafter\expandafter\expandafter\expandafter\mginput@end\expandafter\expandafter\expandafter\expandafter\expandafter\expandafter\expandafter{\expandafter\expandafter\expandafter\newoptions\expandafter\expandafter\expandafter}\expandafter\expandafter\expandafter{\expandafter\options\expandafter}\expandafter{\opts}{#2}%
	}
}
\newcommand\gateoutput[2][]{%
	\pgfkeys{/quantikz,wires=1,style=,label style=,braces=}
	\pgfkeys{/quantikz,#1}%
	\edef\newoptions{row=\the\pgfmatrixcurrentrow,col=\the\pgfmatrixcurrentcolumn,#1}
 	\pgfkeysgetvalue{/quantikz/label style}{\options}
 	\pgfkeysgetvalue{/quantikz/braces}{\opts}
	\expandafter\expandafter\expandafter\expandafter\expandafter\expandafter\expandafter\pgfutil@g@addto@macro\expandafter\expandafter\expandafter\expandafter\expandafter\expandafter\expandafter\tikzcd@atendlabels\expandafter\expandafter\expandafter\expandafter\expandafter\expandafter\expandafter{%
		\expandafter\expandafter\expandafter\expandafter\expandafter\expandafter\expandafter\mgoutput@end\expandafter\expandafter\expandafter\expandafter\expandafter\expandafter\expandafter{\expandafter\expandafter\expandafter\newoptions\expandafter\expandafter\expandafter}\expandafter\expandafter\expandafter{\expandafter\options\expandafter}\expandafter{\opts}{#2}%
	}
}
\newcommand{\mginput@end}[4]{
	\pgfkeys{/quantikz,wires=1}
 	\pgfkeys{/quantikz,#1}%
 	\edef\quantwires{\pgfkeysvalueof{/quantikz/wires}}
	\pgfkeysgetvalue{/quantikz/row}{\row}
	\pgfkeysgetvalue{/quantikz/col}{\col}
\xdef\cell{group\tikzcdmatrixname-1-\col}
\foreach \n in {\row,...,1} {%
	\ifnodedefined{group\tikzcdmatrixname-\n-\col}{%
    	\xdef\cell{group\tikzcdmatrixname-\n-\col}
    	\breakforeach
	}{}
}
\ifthenelse{\quantwires=1}{%
	\node at ($(\cell.west |- \tikzcdmatrixname-\row-\col.west)+(0,0cm)$)[leftinternal,#2]{#4};
}{%
\draw[dd,#3] ($(\cell.west |- \tikzcdmatrixname-\row-\col.west)+(0.1cm,0.1cm)$) to node[leftinternal,midway,#2] {#4} ($(\cell.west |- \tikzcdmatrixname-\the\numexpr\row+\quantwires-1\relax-\col.west)+(0.1cm,-0.1cm)$);
}
} %
\newcommand{\mgoutput@end}[4]{%
	\pgfkeys{/quantikz,wires=1}
 	\pgfkeys{/quantikz,#1}%
 	\edef\quantwires{\pgfkeysvalueof{/quantikz/wires}}
	\pgfkeysgetvalue{/quantikz/row}{\row}
	\pgfkeysgetvalue{/quantikz/col}{\col}
\xdef\cell{group\tikzcdmatrixname-1-\col}
\foreach \n in {\row,...,1} {%
	\ifnodedefined{group\tikzcdmatrixname-\n-\col}{%
    	\xdef\cell{group\tikzcdmatrixname-\n-\col}
    	\breakforeach
	}{}
}
\ifthenelse{\quantwires=1}{%
	\node at ($(\cell.east |- \tikzcdmatrixname-\row-\col.east)+(0,0cm)$)[rightinternal,#2]{#4};
}{%
\draw[dm,#3] ($(\cell.east |- \tikzcdmatrixname-\row-\col.east)+(-0.1cm,0.1cm)$) to node[rightinternal,midway,#2] {#4} ($(\cell.east |- \tikzcdmatrixname-\the\numexpr\row+\quantwires-1\relax-\col.east)+(-0.1cm,-0.1cm)$);
}
} %
\newcommand\wave[1][]{%
	\edef\n{\the\pgfmatrixcurrentrow}
	\expandafter\pgfutil@g@addto@macro\expandafter\tikzcd@atendslices\expandafter{%
		\expandafter\wave@end\expandafter{\n}{#1}%
	}
}
\newcommand{\wave@end}[2]{
	\node (wave-#1) [fit=(\tikzcdmatrixname-row#1),wave,#2] {};
}
\DeclareDocumentCommand{\makeebit}{O{-45}O{}m}{
	\arrow[arrows,line cap=round,to path={(\tikztostart) -- ($(\tikztostart)!{0.5/cos(#1)}!#1:(\tikztotarget)$) node [anchor=east,style={#2}]{#3} -- (\tikztotarget) }]{d}
}
\newcommand\gategroup[2][]{%
	\pgfkeys{/quantikz,wires=1,style=,label style=,braces=}
	\pgfkeys{/quantikz,#1}%
	\edef\newoptions{row=\the\pgfmatrixcurrentrow,col=\the\pgfmatrixcurrentcolumn,#1}
 	\pgfkeysgetvalue{/quantikz/style}{\options}
 	\pgfkeysgetvalue{/quantikz/label style}{\opts}
	\expandafter\expandafter\expandafter\expandafter\expandafter\expandafter\expandafter\pgfutil@g@addto@macro\expandafter\expandafter\expandafter\expandafter\expandafter\expandafter\expandafter\tikzcd@atendlabels\expandafter\expandafter\expandafter\expandafter\expandafter\expandafter\expandafter{%
		\expandafter\expandafter\expandafter\expandafter\expandafter\expandafter\expandafter\gategroup@end\expandafter\expandafter\expandafter\expandafter\expandafter\expandafter\expandafter{\expandafter\expandafter\expandafter\newoptions\expandafter\expandafter\expandafter}\expandafter\expandafter\expandafter{\expandafter\options\expandafter}\expandafter{\opts}{#2}%
	}
}
\newcommand{\gategroup@end}[4]{
	\pgfkeys{/quantikz,wires=1,style=,label style=,braces=,steps=1}%
	\edef\background{0}%
	\pgfkeys{/quantikz,#1}%
	\pgfkeysgetvalue{/quantikz/wires}{\quantwires}%
	\pgfkeysgetvalue{/quantikz/row}{\row}%
	\pgfkeysgetvalue{/quantikz/col}{\col}%
	\pgfkeysgetvalue{/quantikz/steps}{\steps}%
	\edef\fit{(\tikzcdmatrixname-col\col.west |- \tikzcdmatrixname-row\row.north)(\tikzcdmatrixname-col\the\numexpr\col+\steps-1\relax.east |- \tikzcdmatrixname-row\the\numexpr\row+\quantwires-1\relax.south)}%
	\ifthenelse{\background=1}{%
		\begin{scope}[on background layer]\node (ggroup-\row-\col) [fit=\fit,ggroup,label={[group label,#3]:#4},#2] {};\end{scope}
	}{%
		\node (ggroup-\row-\col) [fit=\fit,ggroup,label={[group label,#3]:#4},#2] {};
	}
}
\newcommand{\cwbend}[1]{
	\vcw{#1}\cw
	\edef\cell{\the\pgfmatrixcurrentrow-\the\pgfmatrixcurrentcolumn}
	\expandafter\pgfutil@g@addto@macro\expandafter\tikzcd@atendlabels\expandafter{%
		\expandafter\latephase@end\expandafter{\cell}
	}
}
\newcommand{\latephase@end}[1]{
	\node [phase,inner sep=2pt] at (\tikzcdmatrixname-#1) {};
}
\patchcmd\tikzcd@{\tikzpicture}{\def\toslice{0}\def\vert{0}
 \begin{tikzpicture}}{}{}
\global\let\tikzcd@savedpaths\pgfutil@empty}
\global\let\tikzcd@savedpaths\pgfutil@empty
    \global\let\tikzcd@atendsavedpaths\pgfutil@empty
    \global\let\tikzcd@atendlabels\pgfutil@empty
    \global\let\tikzcd@atendslices\pgfutil@empty
    \global\let\tikzcd@atendglobals\pgfutil@empty
\def\endtikzcd{%
  \pgfmatrixendrow\egroup%
  \pgfextra{\global\let\tikzcdmatrixname\tikzlastnode};%
  \tikzcdset{\the\pgfmatrixcurrentrow-row diagram/.try}%
  \begingroup%
    \pgfkeys{
      /handlers/first char syntax/the character "/.initial=\tikzcd@forward@quotes,%
      /tikz/edge quotes mean={%
        edge node={node [execute at begin node=\iftikzcd@mathmode$\fi,
                         execute at end node=\iftikzcd@mathmode$\fi,
                         /tikz/commutative diagrams/.cd,every label,##2]{##1}}}}%
    \let\tikzcd@errmessage\errmessage
    \def\errmessage##1{\tikzcd@errmessage{##1^^J...^^Jl.\tikzcd@lineno\space%
        I think the culprit is a tikzcd arrow in cell \tikzcd@currentrow-\tikzcd@currentcolumn}}%
    \tikzcd@before@paths@hook%
    \tikzcd@savedpaths\tikzcd@atendsavedpaths\tikzcd@atendlabels\tikzcd@atendslices{\globaldefs=1\tikzcd@atendglobals\globaldefs=0}
  \endgroup%
  \end{tikzpicture}%
  \ifnum0=`{}\fi}
\newcommand{\DivideRowsCols}{
	\foreach \n in {1,...,\the\pgfmatrixcurrentrow} {
	\xdef\LoopRow{}
		\foreach \m in {1,...,\the\pgfmatrixcurrentcolumn}{
			\ifnodedefined{\tikzcdmatrixname-\n-\m}{
				\xdef\LoopRow{\LoopRow(\tikzcdmatrixname-\n-\m)}
			}{}
			\ifnodedefined{group\tikzcdmatrixname-\n-\m}{
				\xdef\LoopRow{\LoopRow(group\tikzcdmatrixname-\n-\m)}
			}{}
	}
	\node (\tikzcdmatrixname-row\n) [fit=\LoopRow] {};
	}
	\foreach \n in {1,...,\the\pgfmatrixcurrentcolumn} {
	\xdef\LoopCol{}
		\foreach \m in {1,...,\the\pgfmatrixcurrentrow}{
			\ifnodedefined{\tikzcdmatrixname-\m-\n}{
				\xdef\LoopCol{\LoopCol(\tikzcdmatrixname-\m-\n)}
			}{}
			\ifnodedefined{group\tikzcdmatrixname-\m-\n}{
				\xdef\LoopCol{\LoopCol(group\tikzcdmatrixname-\m-\n)}
			}{}
	}
	\node (\tikzcdmatrixname-col\n) [fit=\LoopCol] {};
	}
}
\providecommand{\setmiddle}[1]{%
\IfInteger{#1}{
\pgfmathtruncatemacro\wholepart{floor(#1)}
\edef\temp{\noexpand\tikzset{%
/tikz/baseline={([yshift=-\MathAxis]\noexpand\tikzcdmatrixname-\wholepart-1.base)}
}}
\temp
}{%
\pgfmathtruncatemacro\wholepart{floor(#1)}
\pgfmathtruncatemacro\neighbour{floor(#1)+1}
\pgfmathsetmacro\fractionalpart{#1-floor(#1)}
\edef\temp{\noexpand\tikzset{%
/tikz/baseline={([yshift=-\MathAxis]$(\noexpand\tikzcdmatrixname-\wholepart-1.base)!\fractionalpart!(\noexpand\tikzcdmatrixname-\neighbour-1.base)$)}
}}
\temp
}
}
\providecommand{\ket}[1]{\ensuremath{\left|#1\right\rangle}}
\providecommand{\bra}[1]{\ensuremath{\left\langle#1\right |}}
\providecommand{\braket}[2]{\ensuremath{\left\langle#1\middle|#2\right\rangle}}
\newcommand{\push}[1]{#1 \qw}
\DeclareExpandableDocumentCommand{\phase}{O{}m}{|[phase,#1,label={[phase label,#1]#2}]| {} \qw}
\DeclareExpandableDocumentCommand{\control}{O{}m}{|[phase,#1]| {} \qw}
\DeclareExpandableDocumentCommand{\ocontrol}{O{}m}{|[ophase,#1]| {} \qw}
\DeclareExpandableDocumentCommand{\targ}{O{}m}{|[circlewc,#1]| {} \qw}
\DeclareExpandableDocumentCommand{\targX}{O{}m}{|[crossx2,#1]| {} \qw}
\DeclareExpandableDocumentCommand{\meter}{O{}{m}}{|[meter,label={[my label]#2},#1]| {} \qw}
\DeclareExpandableDocumentCommand{\measuretab}{O{}{m}}{|[measuretab,#1]| {#2} \qw}
\DeclareExpandableDocumentCommand{\meterD}{O{}{m}}{|[meterD,#1]| {#2} \qw}
\DeclareExpandableDocumentCommand{\measure}{O{}{m}}{|[measure,#1]| {#2} \qw}
\DeclareExpandableDocumentCommand{\trash}{O{}{m}}{|[trash,label={below:#2},#1]| {} \qw}
\DeclareExpandableDocumentCommand{\ctrlbundle}{O{1}O{}m}{|[phase bundle,#2]| {} \vqw{#3}\qwbundle[alternate=#1]{}}
\def\swap#1{%
	\targX{}
	\edef\start{\the\pgfmatrixcurrentrow-\the\pgfmatrixcurrentcolumn}
	\edef\end{\the\numexpr#1+\pgfmatrixcurrentrow\relax-\the\pgfmatrixcurrentcolumn}
	\expandafter\expandafter\expandafter\vqwexplicitcenter\expandafter\expandafter\expandafter{\expandafter\start\expandafter}\expandafter{\end}
}
\newcommand{\vcw}[1]{
	\edef\start{\the\pgfmatrixcurrentrow-\the\pgfmatrixcurrentcolumn}
	\edef\end{\the\numexpr#1+\pgfmatrixcurrentrow\relax-\the\pgfmatrixcurrentcolumn}
	\expandafter\expandafter\expandafter\vcwexplicit\expandafter\expandafter\expandafter{\expandafter\start\expandafter}\expandafter{\end}
}
\newcommand{\vqw}[1]{
	\edef\start{\the\pgfmatrixcurrentrow-\the\pgfmatrixcurrentcolumn}
	\edef\end{\the\numexpr#1+\pgfmatrixcurrentrow\relax-\the\pgfmatrixcurrentcolumn}
	\expandafter\expandafter\expandafter\vqwexplicit\expandafter\expandafter\expandafter{\expandafter\start\expandafter}\expandafter{\end}
}
\newcommand{\vqwexplicit}[2]{
	\arrow[from=#1,to=#2,arrows] {}
}
\newcommand{\vqbundleexplicit}[2]{
	\arrow[from=#1,to=#2,arrows] {} \arrow[from=#1,to=#2,arrows,yshift=0.1cm] {}\arrow[from=#1,to=#2,arrows,yshift=-0.1cm] {}
}
\newcommand{\vcwexplicit}[2]{
	\arrow[from=#1,to=#2,arrows,xshift=0.05cm] {}\arrow[from=#1,to=#2,arrows,xshift=-0.05cm] {}
}
\newcommand{\vcwhexplicit}[2]{
	\arrow[from=#1,to=#2,arrows,yshift=0.05cm] {}\arrow[from=#1,to=#2,arrows,yshift=-0.05cm] {}
}
\newcommand{\vqwexplicitcenter}[2]{
	\arrow[from=#1,to=#2,arrows,start anchor=center,end anchor=center] {}
}
\newcommand{\qw}{\ifthenelse{\the\pgfmatrixcurrentcolumn>1}{\arrow[arrows]{l}}{}}
\newcommand{\cw}{\ifthenelse{\the\pgfmatrixcurrentcolumn>1}{\arrow[arrows,yshift=0.05cm]{l}\arrow[arrows,yshift=-0.05cm]{l}}{}}
\newcommand{\qwbundle}[2][]{\ifthenelse{\the\pgfmatrixcurrentcolumn>1}{
	\def\helper{0}
	\pgfset{/quantikz,#1}
	\ifthenelse{\helper>0}{
		\arrow[arrows,yshift=0.1cm]{l}
		\ifthenelse{\helper=1}{\arrow[arrows]{l}}{}
		\arrow[arrows,yshift=-0.1cm]{l}
	}{
	\arrow[phantom,strike arrow]{l}[xshift=\pgfkeysvalueof{/quantikz/Strike Width}, yshift=\pgfkeysvalueof{/quantikz/Strike Height},anchor=south west,inner sep=0pt]{\scriptstyle #2}\qw
	}}{}}
\tikzset{
	thickness/.style={thick},
    operator/.style={draw,fill=white,minimum size=1.5em, inner sep=2pt,thickness,align=center},
    ggroup/.style={draw,minimum size=1.5em,thickness,align=center,inner sep=4pt},
    leftinternal/.style={anchor=mid west,font=\scriptsize,inner sep=4pt,align=center},
    rightinternal/.style={anchor=mid east,font=\scriptsize,inner sep=4pt,align=center},
    wave/.style={inner sep=-3pt,tape,fill=white,apply={draw=black} except on segments {5,6,1,2,9}},
    phase/.style={fill,shape=circle,minimum size=4pt},
    phase bundle/.style={fill,shape=rectangle,rounded corners=1.5pt,minimum width=4pt,minimum height=10pt},
    phase label/.style={label distance=2mm,anchor=mid,label position=45},
    ophase/.style={fill=white,draw=black,shape=circle,minimum size=4pt},
    internal/.style={thickness,black},
    line/.style={path picture={ 
\draw[internal](path picture bounding box.west) -- (path picture bounding box.east);
}},
	linecont/.style={circle,line},
    cross/.style={path picture={ 
\draw[internal](path picture bounding box.north) -- (path picture bounding box.south) (path picture bounding box.west) -- (path picture bounding box.east);
}},
    circlewc/.style={draw,circle,cross,minimum width=4pt,inner sep=3pt},
    crossx/.style={path picture={ 
\draw[internal,inner sep=0pt]
(path picture bounding box.south east) -- (path picture bounding box.north west) (path picture bounding box.south west) -- (path picture bounding box.north east) (path picture bounding box.west) -- (path picture bounding box.east);
}},
	crossx2/.style={circle,crossx,minimum size=1em},
	trash/.style={path picture={\draw[internal,inner sep=0pt,-stealth] (path picture bounding box.west) -- (path picture bounding box.center) -- (path picture bounding box.south);},minimum height=2.5em,minimum width=2em},
    dd/.style={decoration={brace},decorate,thickness},
    dm/.style={decoration={brace,mirror},decorate,thickness},
    slice/.style={thickness,red,dash pattern=on 5pt off 3pt,align=center},
    meter/.style={draw,fill=white,minimum width=2em,minimum height=1.5em, rectangle, font=\vphantom{A}, thickness,
 path picture={\draw ([shift={(.1,.24)}]path picture bounding box.south west) to[bend left=50] ([shift={(-.1,.24)}]path picture bounding box.south east);\draw[-{Latex[scale=0.6]}] ([shift={(0,.1)}]path picture bounding box.south) -- ([shift={(.3,-.1)}]path picture bounding box.north);}},
 	measuretab/.style={draw,signal,signal to=west,inner sep=4pt,fill=white},
 	meterD/.style={draw,rounded rectangle,rounded rectangle left arc=none,inner sep=4pt,fill=white},
 	measure/.style={draw,rounded rectangle,inner sep=4pt,fill=white},
 	my label/.style={yshift=0.1cm,above,align=center},
 	gg label/.style={label position=center,align=center},
 	group label/.style={label position=above,yshift=0.2cm,anchor=mid},
 	strike arrow/.style={
    decoration={markings, mark=at position 0.5 with {
        \draw [internal,-] 
            ++ (-\pgfkeysvalueof{/quantikz/Strike Width},-\pgfkeysvalueof{/quantikz/Strike Height} )
            -- ( \pgfkeysvalueof{/quantikz/Strike Width}, \pgfkeysvalueof{/quantikz/Strike Height});}	
    },
    postaction={decorate},
}
}
\def\resetstyles{
\tikzcdset{thin,every cell/.append style={thin},arrows/.append style={thin}
    }
\tikzset{
	thickness/.style={thin},
	meter/.append style={thin},
	phase/.append style={minimum size=3pt},
	ophase/.append style={minimum size=3pt},
}
}
\def\maketransparent{
\tikzset{
	operator/.append style={fill opacity=0},
	meter/.append style={fill opacity=0}
}
} \makeatother
\crefname{gpair}{pair}{pairs}
\crefname{prop}{property}{properties}
\crefname{figure}{\figurename}{\figurename}
\crefname{table}{\tablename}{\tablename}
\Crefname{equation}{Eq.}{Eqs.}
\newcommand{\eqdef}{\coloneqq} 
\newcommand{\integersbelow}[1]{\mathbb{Z}_{#1}}
\newcommand{\tra}{\mathsf{T}}
\newcommand{\fpi}[2][]{\frac{#1\pi}{#2}}
\newcommand{\lpi}[2][]{#1\mkern1mu\pi/#2}
\newcommand{\perr}{p_{\text{err}}}
\tikzset{%
    vert/.style={circle, fill=black, scale=0.18},
    edge/.style={thick},
}
\newcommand{\mhgraph}[4][]{
	\begin{tikzpicture}[#1]
    \def\colors{"blue","orange","green!50!black","purple"};
    \def\r{1}; \def\shift{1.3mm}; \def\d{#2};
	
	\path 
	    \foreach \i in {1,2,...,2\d} {
	        -- (0,0) -- ++({\i*180/\d}:\r) node[vert]{} coordinate (v-\i)
            \foreach \x/\sh/\edges in {in/-\shift/{#3},out/\shift/{#4}} {
                -- (0,0) -- ++({\i*180/\d}:\r)
                \foreach [count=\j] \edge in \edges {
    	            -- ++({\i*180/\d}:\sh) coordinate (e-\x-\j-\i)
    	        }
            }
	    };
	
    \foreach \x/\edges in {in/{#3},out/{#4}} {
        \foreach \edge/\p [count=\c,evaluate=\c as \col %
                           using {{\colors}[\c-1]}] in \edges {
            \draw[edge,\col]
                \foreach [count=\j] \vert in \edge {
                    \ifnum \j > 1  --  \fi
                    (e-\x-\p-\vert)
                };
        }
    }
    
	\end{tikzpicture}
}
\newcommand{\topdf}{\texorpdfstring}
\begin{document}

\title{Breaking simple quantum position verification protocols with little entanglement}

\author{Andrea Olivo}
\email{andrea.olivo@inria.fr}
\affiliation{Inria, Paris, France}
\affiliation{LPGP, CNRS, Université Paris-Saclay, 91405 Orsay Cedex, France}

\author{Ulysse Chabaud}
\email{ulysse.chabaud@lip6.fr}
\affiliation{Sorbonne Université, CNRS, LIP6, 4 place Jussieu, F-75005 Paris, France}

\author{André Chailloux}
\email{achaillo@inria.fr}
\affiliation{Inria, Paris, France}

\author{Frédéric Grosshans}
\email{frederic.grosshans@lip6.fr}
\affiliation{Sorbonne Université, CNRS, LIP6, 4 place Jussieu, F-75005 Paris, France}

\begin{abstract}
Instantaneous nonlocal quantum computation (INQC) evades apparent quantum and relativistic
constraints and allows to attack generic quantum position verification (QPV) protocols---%
aiming at securely certifying the location of a distant prover---at an exponential
entanglement cost.
We consider adversaries sharing maximally entangled pairs of
qudits and find low-dimensional INQC attacks against the simple practical
family of QPV protocols based on single photons polarized at an angle $\theta$.
We find exact attacks against some rational angles,
including some sitting outside of the Clifford hierarchy (e.g.\ $\lpi{6}$), and show
no $\theta$ allows to tolerate errors higher than $\simeq 5\cdot 10^{-3}$ against
adversaries holding two ebits per protocol's qubit.
\end{abstract}

\maketitle

\section{Introduction} \label{sec:intro}

The interplay between
quantum constraints on measurements and relativistic effects is very subtle,
as witnessed, among others, by the famous Bohr--Einstein debate \cite{Bohr49}.
As early as 1931, Landau and Peierls \cite{LandauPeierls31} showed the measurement
of the electromagnetic field at a specific location to be nonlocal
and therefore deduced its impossibility.
However, in 1980 Aharonov and Albert \cite{AharonovAlbert1980} started a line 
of research investigating how to harness entanglement
to perform non-local measurements and operations without violating
causality---a feat now called \emph{instantaneous nonlocal quantum computation} (INQC);
e.g.\ they showed in 1981 \cite{AharonovAlbert1981} how to perform what we now
call a Bell measurement between two distant particles using an entangled pair
of qubits. These results were generalized to other observables
\cite{AharonovAlbert1984I,AharonovAlbert1984II,AAV86,PopescuVaidman94,GroismanVaidman01},
until Vaidman showed in 2003 \cite{Vaidman03}
how to approximate any nonlocal measurement
using teleportation \cite{bennett1993teleporting} and
causal classical communications.

In 2009 Chandran et al.~\cite{CGMO09} investigated a cryptographic primitive
known as position verification (namely, the task of
certifying a prover is at a specific location) exploiting timing constraints and
the relativistic speed limit on information propagation.
Applications include encryption decipherable only at a specific distant
location.
They showed this primitive to be insecure
in the classical setting, even under computational assumptions:
a coalition of colluding adversaries mimicking the honest prover's actions
can break any classical protocol by copying and sharing the data sent by the verifiers.
The following year,\footnote{We refer the reader interested in the involved history
  of the first QPV protocol to \cite{SchaffnerPBQCWeb}}
three groups independently proposed quantum position verification (QPV) protocols
\cite{PatentKMSB06,KMS11,CFG+10,Malaney10a,Malaney10b,BCF+14}, building on the no-cloning
properties of quantum mechanics.

However, because of its universality \cite{Vaidman03}, INQC turned out to be a powerful tool
to attack QPV protocols \cite{BCF+14};
their security is not unconditional, but is characterized by the resources
needed to perform the associated INQC protocol.
This prompted new, quantitative investigations into INQC.
Beigi and König used port-based teleportation~\cite{ishizaka2009quantum}
to reduce the entanglement cost of the
universal approximate attack \cite{Vaidman03,BCF+14} from doubly exponential in the number
of used qubits to simply exponential.
The cost has been further reduced for the exact implementation of specific families
of nonlocal unitaries, either in the Clifford group \cite{KMS11,LauLo11,GC20},
as well as operations  finite depth in the Clifford hierarchy \cite{CL15,speelman2015instantaneous},
and teleportation routed according to distributed classical functions~\cite{BFSS13,KlauckPodder14}.
Recently, Gonzales and Chitambar \cite{GC20} improved the implementation
of arbitrary two qubit unitaries.

Security proofs for QPV have proven to be elusive,
with the notable exception of the hash-function
based protocol proposed by Unruh~\cite{Unruh14}, which requires exponentially
many queries in the random oracle model.
The other published results correspond to lower bounds on the amount of
entanglement needed to spoof a QPV protocol by INQC.
Some protocols have initially been proven to require entanglement~\cite{BCF+14},
then security was extended through entropic reasoning to lower bounds smaller than one
entangled pair per qubit~\cite{TFKW13}.
While the improved bound in~\cite{RG15} is tight
for a simple protocol~\cite{KMS11},
it is still exponentially far from the best known universal attack.

In this letter we focus on QPV$_\theta$, one of the simplest protocol classes for QPV.
This family of protocols is a straightforward generalization of one of the first QPV
protocols, which was inspired by Bennett and Brassard’s seminal
quantum key distribution (QKD) protocol~\cite{BB84},
and has been used through most of the literature on QPV.
On a practical side, QPV$_\theta$ would need a relatively simple set-up,
similar to what is currently being developed for free-space QKD
\cite{Pugh+17,Liao+17,Avesani+19} with stringent timing constraints.
Most protocols in QPV$_\theta$ are resistant to exact attacks from adversaries
pre-sharing a maximally entangled pair of qubits or qutrits~\cite{LauLo11}.
It is therefore natural to ask how well can attackers do on these near-term
protocols with entangled states of bigger (but still of practical interest) dimension.
On the theoretical side, it will hopefully help to gain
better insights into INQC:
tuning the single parameter $\theta$ allows us to explore the Clifford hierarchy,
greatly changing the cost of known exact attacks \cite{KMS11,CL15,GC20}.
More generally, known attacks to various QPV protocols seem to hint towards a profound
link with deeper questions in quantum information, from generalized teleportation 
schemes \cite{BeigiKoenig11,BFSS13,KlauckPodder14}
to quantum compilation of INQC unitaries
\cite{CL15,speelman2015instantaneous,GC20}.

Here, we study INQC protocols using a small amount of entanglement through
the attacks against QPV$_\theta$, presented in \cref{sec:protocol} along with the attack model.
We describe, in \cref{sec:circuitrep}, a new circuit representation of the QPV$_\theta$ protocols
and use it in \cref{sec:exactattacks} to characterize exact attacks exploiting entangled qudit
pairs of dimension $d\le12$, finding the most efficient INQC protocols to date
for many angles, including some out of the Clifford hierarchy.
In \cref{sec:approxattacks}, we use it to numerically explore the best
approximate attacks for $d\le5$.

\section{Protocol and attack model}
    \label{sec:protocol}

\begin{figure}[b]
    \centering
    \captionsetup[subfloat]{farskip=0pt}
    \subfloat[honest prover at $P$\label{fig:diagQPVhonest}]{%
    \begin{tikzpicture}[scale=1.3]
        \draw [->] (0,0) -- (0,3.25) coordinate (yaxis) node[left] {$t$};
        \draw (0,0) -- (2.8,0) coordinate (xaxis);
        \draw decorate[decoration={snake,amplitude=0.6mm}]
            {(0.2,0) coordinate (a_1) -- ++(45:2.1)} -- ++(45:1.15)  coordinate (a_2);
        \draw (2.5,0.7) coordinate (b_1) -- ++(135:3.25) coordinate (b_2);
        \coordinate (p) at (intersection of a_1--a_2 and b_2--b_1);
        \fill[red!90!black] (p) circle (1.5pt);
        \path (a_1) node[below] {$V_1$} ++(48:1.7) node[left] {$(R_\theta)^b\ket{x}$}
            (b_1) ++(140:0.4) node[above right] {$b$} (b_1 |- 0,0) node[below] {$V_2$}
            (p |- 0,0) node[below,red!90!black] {$P$} (p) ++(130:0.5) node[above] {$x$}
            (p) ++(50:0.5) node[above] {$x$};
        \draw[gray] (a_1) -- ++(90:3.1) (b_1 |- 0,0) -- ++(90:3.1);
        \draw[dashed,gray] (yaxis |- a_1) ++(0,0.03) node[left] {$t_1$} -- ++(0:2.7)
            (yaxis |- b_1) node[left] {$t_2$} -- ++(0:2.7)
            (yaxis |- b_2) node[left] {$t_4$} -- ++(0:2.7)
            (yaxis |- a_2) node[left] {$t_3$} -- ++(0:2.7);
    \end{tikzpicture} }
    \hfill
    \subfloat[colluding attackers at $A,B$\label{fig:diagQPVattack}]{%
    \begin{tikzpicture}[scale=1.3]
        \draw [->] (0,0) -- (0,3.25) coordinate (yaxis) node[left] {$t$};
        \draw (0,0) -- (2.8,0) coordinate (xaxis);
        \draw decorate[decoration={snake,amplitude=0.6mm}]
            {(0.2,0) coordinate (a_1) -- ++(45:1.35) coordinate (e_1)}
            ++(45:1.25) coordinate (ee_2) -- ++(45:0.65) coordinate (a_2);
        \path (2.5,0.7) coordinate (b_1)  (ee_2 |- 0,0) coordinate (temp_2)
            (b_1) ++(135:2) coordinate (temp_3)  (e_1 |- 0,0) coordinate (temp_1)
            (intersection of ee_2--temp_2 and b_1--temp_3) coordinate (e_2)
            (intersection of e_1--temp_1 and b_1--temp_3) coordinate (ee_1)
            (e_1) ++(-45:1) coordinate (temp_4)  (e_2) ++(-135:1) coordinate (temp_5)
            (intersection of e_1--temp_4 and e_2--temp_5) coordinate (h);
        \draw (b_1) -- (e_2) (ee_1) -- ++(135:1.35) coordinate (b_2);
        \draw[blue,dashed] (e_1) ++(41:0.4) node[below] {$u$}
            (ee_1) ++(-38:0.5) node[above] {$b,s$} (e_1) -- (ee_1) node[left,midway] {$u$}
            -- (e_2) -- (ee_2) node[right,midway] {$b,s$} -- (e_1);
        \draw[blue] decorate[decoration={snake,amplitude=0.6mm},segment length=5.5]
            {(h) -- (e_1) (h) -- (e_2)} (h) node[below] {$\ket{\Phi}$};
        \foreach \x in {e_1,ee_1,e_2,ee_2,h} {\fill[blue] (\x) circle (1pt);}
        \path (a_1) node[below] {$V_1$} ++(48:0.7) node[left] {$\ket{\psi}$}
            (b_1) ++(140:0.4) node[above right] {$b$} (b_1 |- 0,0) node[below] {$V_2$}
            (ee_1) ++(130:0.4) node[above] {$x$} (ee_2) ++(50:0.2) node[above] {$x$};
        \path[blue] (e_1 |- 0,0) node[below] {$A$} (e_2 |- 0,0) node[below] {$B$};
        \draw[gray] (a_1) -- ++(90:3.1) (b_1 |- 0,0) -- ++(90:3.1);
        \draw[dashed,gray] (yaxis |- a_1) ++(0,0.03) node[left] {$t_1$} -- ++(0:2.7)
            (yaxis |- b_1) node[left] {$t_2$} -- ++(0:2.7)
            (yaxis |- b_2) node[left] {$t_4$} -- ++(0:2.7)
            (yaxis |- a_2) node[left] {$t_3$} -- ++(0:2.7);
    \end{tikzpicture} }
    \caption{
        Spacetime diagrams of QPV$_\theta$ protocol and attack model.
        Lines at 45° represent lightspeed quantum (ondulated) and classical
        (straight, solid and dashed) channels.
        \protect\subref{fig:diagQPVhonest}
        When a prover is present at $P$, he measures the quantum input in the
        correct basis and broadcasts the measurement result $x$ back to $V_1$ and $V_2$.
        \protect\subref{fig:diagQPVattack}
        Attackers have access to locations $A$ and $B$ and share a  quantum
        resource $\ket{\Phi}$.
        They share the classical outcomes of their measurements and attempt to
        reconstruct $x$ in time to be broadcast back to the verifier.
    }
    \label{fig:diagQPV}
\end{figure}
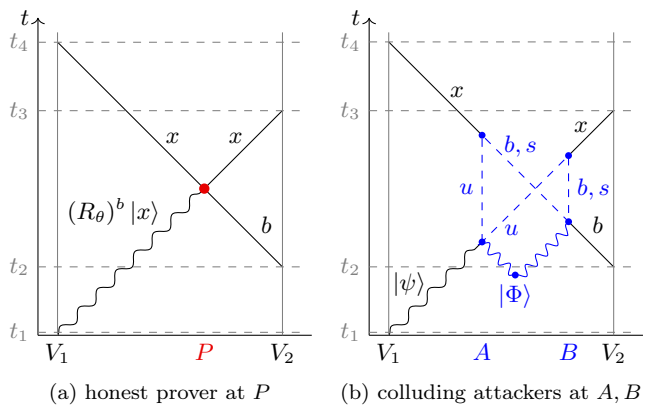

We constrain all parties to 1D space.
We give a more detailed definition and a simple attack in \cref{sec:qpvprot}.
\paragraph*{QPV$_\theta$ protocol}
The verifiers control two stations $V_1$ and $V_2$ to the left
and to the right of the prover's claimed position $P$ (\cref{fig:diagQPVhonest}).
By making use of shared random bits $x,b \in \{0,1\}$, they
prepare the qubit $\ket{\psi} = (R_{\theta})^b \ket{x}$ at $V_1$, where
\begin{align}
    R_\theta = \begin{pmatrix}
                   \cos\theta & -\sin\theta \\
                   \sin\theta & \cos\theta
               \end{pmatrix}
    &&
    \theta \in [0,2\pi].
\end{align}
They then send $\ket{\psi}$ from $V_1$ and $b$ from $V_2$, timed in such a way to arrive
simultaneously at $P$.
The prover carries out a measurement on $\ket{\psi}$ in a suitable basis depending on $b$
and broadcasts the classical measurement result $x$ to $V_1$ and $V_2$,
who check they received the correct bit at the expected time.
After repeating for multiple rounds, they consider the position $P$
authenticated if the prover answered correctly often enough.

\paragraph*{Attack model.}
The choice of the attack model, leading us to new attacks, is inspired by the
teleportation attack for the BB84 protocol~\cite{KMS11,LauLo11} (\cref{sec:telep_attack}).
The attackers Alice and Bob have no access to
the location $P$ to be authenticated, but control two stations $A$ and
$B$ respectively located between $V_1$ and $P$ and between $P$ and $V_2$;
a resource quantum state $\ket{\Phi}$ is pre-shared between the two stations.
Special relativity allows them one round of simultaneous communication.
As the protocol requires them to output a classical message, we constrain internal
communication to be classical as well (LOBC model, see~\cite{GC20});
limited quantum communication can be included through teleportation.
All quantum operations are assumed to be unitary, as we focus mainly on the
dimensionality $d$ of $\ket{\Phi}$:
general CP maps can be extended to unitary operators through a Stinespring
dilation~\cite{wilde_2013} using only local resources.
We choose $\ket{\Phi}$ to be a maximally entangled qu$d$it pair in order to exploit
some of its properties, noting that this choice leads to an optimal attack for
the BB84 protocol~\cite{RG15}.
Finally, Alice and Bob act identically and separately on each round.

This attack model translates to the INQC implementation of a special family of
two-qubit nonlocal unitaries:
\begin{equation}
    U_\theta = \text{CNOT}_\text{AB}
        \left(I \otimes \ket{0}\bra{0} + R_{-\theta} \otimes \ket{1}\bra{1}\right),
\end{equation}
making it easier to compare it to known attacks,
in particular the efficient ones in~\cite{GC20} (see \cref{sec:INQC}).

\section{Circuit picture}
    \label{sec:circuitrep}

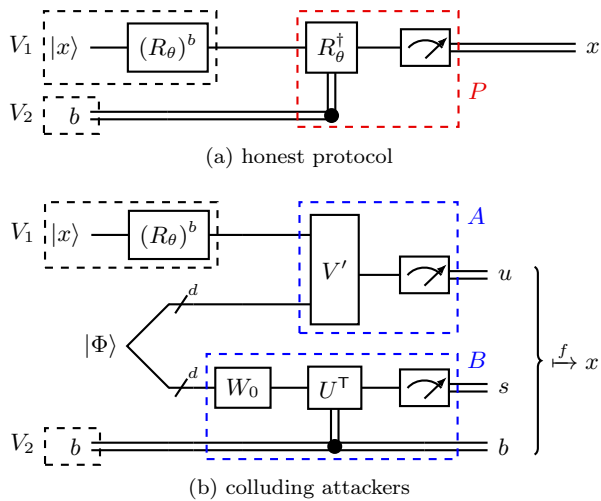
\begin{figure}
    \centering
    \subfloat[honest protocol\label{fig:circQPVhonest}]{%
    \begin{quantikz}
        \gategroup[steps=3, style={dashed, inner ysep=1pt, inner xsep=0pt},
            label style={label position=left, anchor=east, yshift=-.5em}]{$V_1$}
         & \lstick{$\ket{x}$} & \gate{(R_\theta)^b} &[23pt] 
         \gate{R_\theta^\dag}
        \gategroup[2,steps=2,
            style={dashed, draw=red!90!black, inner ysep=1pt, inner xsep=0pt},
            label style={red!90!black, label position=right, anchor=west, yshift=-1.5em}
            ]{$P$}
         &[2pt] \meter{} &[33pt] \cw \rstick{$x$}
        \\[2pt]
        \gategroup[steps=2, style={dashed, inner ysep=0pt, inner xsep=0pt},
            label style={label position=left, anchor=east, yshift=-.5em}]{$V_2$}
        & \lstick{$b$} & \cw & \cwbend{-1} & 
    \end{quantikz}}

    \subfloat[colluding attackers\label{fig:circQPVattack}]{%
    \begin{quantikz}
        \gategroup[steps=3, style={dashed, inner sep=0pt},
            label style={label position=left, anchor=east, yshift=-.5em}]{$V_1$}
        & \lstick{$\ket{x}$} & \gate{(R_\theta)^b} &[-20pt] \qw &[-6pt] \qw
        & \gate[3,nwires={2}]{V'}
        \gategroup[3,steps=2, style={dashed, draw=blue,  inner sep=0pt},
            label style={blue, label position=right,anchor=west,yshift=1.5em}]{$A$} & &
        \\[-18pt]
        & & & & & & \meter{} & \cw \rstick{$u$} & \rstick[4]{$\xmapsto{\smash{f}} x$}
        \\[-18pt]
        & & \makeebit{$\ket{\Phi}$} & \qwbundle{d} & \qw & & & &
        \\
        & & & \qwbundle{d} & \gate{W_0}
        \gategroup[2,steps=3,
            style={dashed, draw=blue, inner xsep=0pt, inner ysep=1.5pt},
            label style={blue, label position=right,anchor=west,yshift=1.3em}]{$B$}
        & \gate{U^\tra} & \meter{} & \cw \rstick{$s$} &
        \\
        \gategroup[steps=2, style={dashed, inner sep=0pt},
            label style={label position=left, anchor=east, yshift=-.5em}]{$V_2$}
        & \lstick{$b$} & \cw & \cw & \cw & \cwbend{-1} & \cw & \cw \rstick{$b$} &
    \end{quantikz} }
    \caption{
        Circuit representation of the spacetime diagrams in \cref{fig:diagQPV},
        where the actions of the verifier, prover and attackers correspond here
        to the dashed boxes.
        The causal relations are enforced by the wires between the boxes;
        the final broadcasting of $x$ is not represented.
    }
    \label{fig:circQPV}
\end{figure}

A precise representation of QPV$_\theta$, both with honest and
cheating provers, is described by the spacetime circuits~\cite{Unruh14}
of \cref{fig:circQPV}.
Alice and Bob's strategy consists in obtaining (clonable) classical information by
interacting their respective inputs with local resources, from which they have
to deduce $x$.
Alice ignores the basis $b$ in which the incoming qubit $(R_\theta)^b \ket{x}$
is encoded, and her actions are modeled by a unitary operation $V'$ acting on
both the verifier's qubit and her half of the entangled qudit pair, followed by
a measurement in the computational basis.
She forwards her outcome $u\in\integersbelow{2d}$ to Bob.

Bob knows the basis $b\in\{0,1\}$ but has only access to his half of the qudit pair, to
which he applies a unitary $W_b$ followed by a measurement in the computational basis.
He obtains $s\in\integersbelow{d}$ he forwards to Alice along with $b$.
Without loss of generality, we define $U^\tra \eqdef W_1^{\vphantom{\dag}} W_0^\dag$,
with $W_b = (U^\tra)^b W_0$;
this allows to rewrite Bob's unitary as a fixed gate $W_0$
followed by a gate $U^\tra$ conditioned on $b$.
The attack is then completed by a classical map $f(b, s, u)$ that they can both 
separately compute after exchanging their measurement results.

Because $\ket{\Phi}$ is maximally entangled we have
\begin{equation}
    (I \otimes W_b^\tra)\ket{\Phi} = (W_b^{\vphantom{\tra}} \otimes I)\ket{\Phi},
\end{equation}
and we derive a formally equivalent circuit for the attack, by transposing $W_0$ and
$(U^\tra)^b$ to Alice's side.
In this version the unitary $W_0^\tra U^b$ is performed by Alice on \emph{her} half
of the entangled state, while Bob immediately measure his half.
Setting $V \eqdef V'(I \otimes W_0^\tra)$, we obtain the reduced circuit
of \cref{fig:circReduction}, in which $\ket{s}$ is the uniformly distributed
computational basis state onto which Alice's qudit is collapsed by Bob's measurement.
While this simplified circuit gives a leaner description of the problem,
it does not preserve the spacetime locality of the operations:
in the real world, Alice has no access to $b$.

\begin{figure}[b]
    \centering
    \begin{quantikz}
        \lstick{$\ket{x}$} &[10pt] \gate{(R_{\theta})^b}
        &[5pt] \gate[3,nwires={2}]{V} &&[-5pt]&&&
        \\[-18pt]
        &&& \qwbundle{2d}\qw & \push{\,\ket{\psi_b(x,s)}\,}
        & \meter{} & \cw \rstick{$u$}
        \\[-18pt]
        \lstick{$\ket{s}$} & \gate{(U)^b}\hphantom{{}_\theta}\qwbundle{d} &&&&&&
    \end{quantikz}
    \caption{
        The reduced circuit.
        It is no longer a spacetime circuit, but is equivalent to 
        \cref{fig:circQPVattack} when $\ket{s}$ is chosen uniformly at random
        and $V \eqdef V'(I\otimes W_0^\tra)$.
        Bob's measurement of his qudit has been omitted.
    }
    \label{fig:circReduction}
\end{figure}
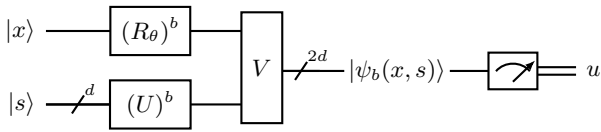

Let $\ket{\psi_b(x,s)}$ be the output state of the reduced circuit before the
measurement,
\begin{equation} \label{eq:psi_bxs}
    \ket{\psi_b(x,s)}\eqdef V(R_{\theta}\otimes U)^b (\ket{x}\otimes\ket{s}).
\end{equation} 
In the following we analyze these states to investigate exact and approximate attacks
against QPV$_\theta$, respectively succeeding with probability $p=1$ and $p<1$.

\section{Exact attacks}
    \label{sec:exactattacks}

Alice and Bob can perform an exact attack if and only if, when given $b$ and $s$,
measuring $\ket{\psi_b(x,s)}$ in the computational basis $\{\ket{u}\}$ is enough
to determine $x$;
namely, the probability of any outcome $u$ has to be zero for at least $x=0$ or $x=1$.
We refer to this requirement as the \textit{deterministic distinguishability condition}:
for all $u\in\integersbelow{2d}$, $s\in\integersbelow{d}$, $b\in\{0,1\}$,
\begin{align} \label{eq:DDC}
    \braket{u|\psi_b(0,s)}=0 \quad \text{or} \quad \braket{u|\psi_b(1,s)}=0,
    && \text{(DDC)}
\end{align}
which is equivalent to $\braket{u|\psi_b(0,s)}\braket{\psi_b(1,s)|u}=0$.
When \cref{eq:DDC} is satisfied, there naturally exists
$f(b,s,u)=x$ for all inputs, giving an exact attack.

\begin{figure}
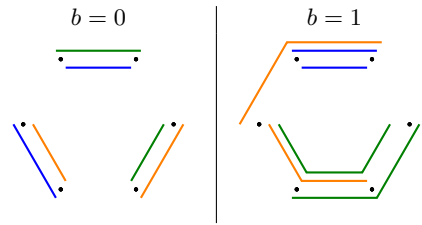

    \begin{tabular}{c|c}
        $b=0$ & $b=1$ \\
        \mhgraph{3}{{1,2}/1,{3,4}/1,{5,6}/1}{{3,4}/1,{5,6}/1,{1,2}/1} &
        \mhgraph{3}{{1,2}/1,{3,4,5}/1,{3,4,5,6}/2}{{1,2}/1,{1,2,3}/2,{4,5,6}/1}
    \end{tabular}
    \caption{A pair of graphs describing an attack for $d=3$}
    \label{fig:hypergraphs}
\end{figure}

Through a custom representation of the output Hilbert space in terms of
hypergraph-inspired objects (\cref{fig:hypergraphs}), we show in \cref{sec:graphproofs}
how to capture some of the restrictions imposed by the DDC.
We use these new tools to analytically characterize exact attacks for $d=2$ and
$d=3$, giving (arguably) simpler proofs for these cases than the ones provided
in~\cite{LauLo11}.
The combinatorial explosion of the above method fundamentally limits its application
to small $d$, even if improvements to $d=4$ might not be totally out of reach.

\begin{table}[b]
    \centering
    \setlength\tabcolsep{4pt}
    \begin{tabular}{c c c c c c c c c c c c}\hline\hline
        $\bm{d}$ &  2  &  3  &  4  &  5  &  6   &  7  &  8  &  9  &  10 &  11 &  12
        \\
        $\bm{k}$ &  4  &  2  &  8  &  4  &8,\,12&  4  &  16 &4,\,6&  20 &  4  &  24
    \\\hline \hline
    \end{tabular}
    \caption{
        New exact attacks for QPV$_\theta$. Depending on the attack dimension
        $d$, we list the values of $k$ for which a valid pair $(U,V)$ breaking
        $\theta=\fpi[n]{k}$ is found $\forall n$.
    }
    \label{tab:attacksfound}
\end{table}

Other approaches are therefore needed: we chose to pursue a numerical method.
From the circuit reduction of \cref{fig:circReduction}, all we need to define an attack
is the pair of unitary matrices $(U,V)$.
Using \cref{eq:psi_bxs}, the DDC may in turn be written as a system of
polynomial equations in the entries of $U$, $V$ and $R_\theta$.
We use a nonlinear least squares method implemented in SciPy~\cite{2020SciPy-NMeth}
to find zeros of the system, as detailed in \cref{sec:num_methods_exact}.
For $d=4$, we quickly find solutions for all angles of the form $\theta=\fpi[n]{8}$,
showing that two ebits are strictly more powerful than an entangled qubit or qutrit.
We then proceed to raise the dimension of the adversaries' entangled qudits up to
$d=12$; we collected our findings in \Cref{tab:attacksfound}.
An interesting pattern emerges: for even $d$, we find an attack for
(at least) all $\theta$ of the form $\fpi[n]{2d}$, and we conjecture
this relation to hold for all even $d$.
Odd dimensions behave differently and appear to be less powerful.
Notably, we find that a pair of maximally entangled six-level systems
is sufficient to break QPV$_{\lpi{6}}$, despite the corresponding rotation
being \emph{outside} of the Clifford hierarchy on qubits.

Direct inspection of the matrices $(U,V)$
have not offered us a straightforward generalization from which an analytic attack
strategy for all $d$ could be derived.
There are a variety of discrete symmetries that are difficult to tackle;
furthermore, our numerical results suggest that the solutions retain some
continuous degrees of freedom.
For some $d$ we present in \cref{sec:explicit_sols} explicit solution matrices,
``reverse-engineered'' from the numerical ones.

\section{Approximate attacks}
    \label{sec:approxattacks}

While the previous method is appropriate to find new exact attacks,
more work is required to gather numerical evidence about the (in)security of
QPV$_\theta$ against adversaries that are allowed a small probability of error.

As detailed in \cref{sec:num_methods_approx}, the error probability for an
attack strategy is:
\begin{equation}\label{eq:p_err}
    \perr = \frac{1}{4d} \sum_{b,s,u}
        \min\big\{ |\braket{u|\psi_b(0,s)}|^2 , |\braket{u|\psi_b(1,s)}|^2 \big\}.
\end{equation}
Minimizing over all attack strategies at fixed $d$,
\begin{align}
    \perr(\theta) = \min_{U,V}\, \perr(U,V,\theta),
\end{align}
we determine an upper bound to the security of QPV$_\theta$.

\begin{figure}[b]
    \centering
    \includegraphics{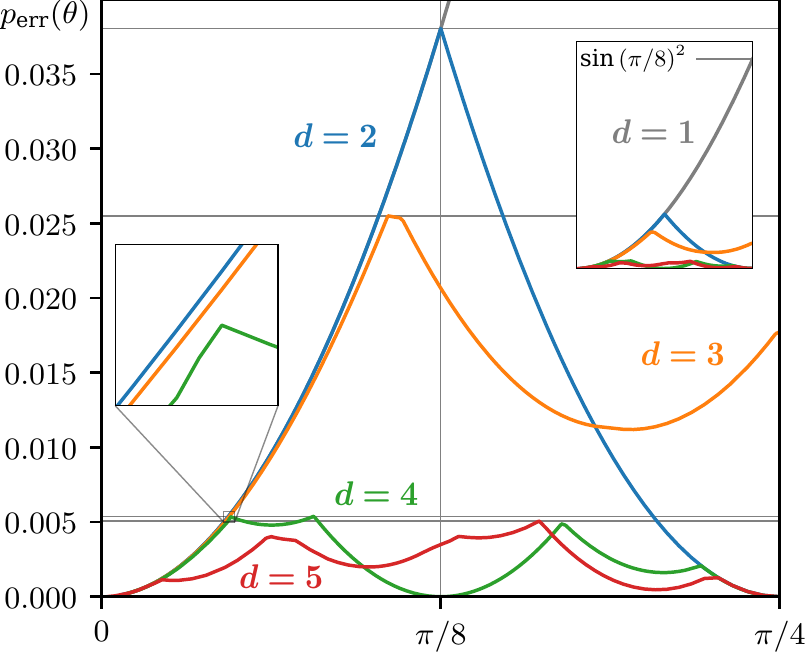}
    \caption{
        The numerically minimized $\perr(\theta)$ for $\theta\in[0,\fpi4]$,
        the other values of $\theta$ being deduced by symmetry.
        Horizontal lines mark the $\max_\theta$ of each curve.
        The $d=1$ analytical curve corresponds to no pre-shared entanglement.
    }
    \label{fig:approx_all}
\end{figure}

Our results (methods in \cref{sec:num_methods_approx})
are plotted in \cref{fig:approx_all} for $d\le5$.
We find a richer structure than what could be expected from the regularity of
the angles of \cref{tab:attacksfound}.
As a nice consistency check, it can be seen that $\perr$ drops to 0 where
we would expect from the exact results.
The shape of the optimal $\perr(\theta)$ curves appears to be of the form
$\min\{p_1(\theta), p_2(\theta), \dots,p_n(\theta)\}$, suggesting that Alice and Bob
may have to employ radically different strategies depending on $\theta$.

\paragraph*{$d=2$}
The adversaries share just one ebit.
The curve found numerically is well fit by
\begin{align}
    \perr(\theta) =
    \begin{cases}
        \sin\left( \frac{\theta}{2} \right)^2 &\, 0\leq\theta\leq\fpi8 \,, \\
        \sin\left( \frac{\theta}{2}-\fpi{8} \right)^2 & \fpi{8}\leq\theta\leq\fpi4 \,.
    \end{cases}
\end{align}
Surprisingly, the probability in the region $0\leq\theta\leq\fpi8$
can be attained without using the ebit at all:
a simple matching strategy is the ubiquitous
\textit{pretty good measurement} (PGM), where Alice can directly measure
the unknown $\ket{\psi} = (R_\theta)^b\ket{x}$ in the intermediate basis
$R_{\theta/2}$ and send the classical result to Bob in the broadcasting phase.

A strategy for the second region ($\fpi{8}\leq\theta\leq\fpi4$) can be obtained
by modifying the the teleportation-based exact attack of \cref{sec:telep_attack},
giving the pair:
\begin{align}
    U = H, && V = (R_{\fpi8 - \frac\theta2} \otimes I) \text{ CNOT}_\text{AB } (H \otimes I).
\end{align}

\paragraph*{$d=3$}
The absence here of an exact attack for ${\theta=\fpi4}$ is more clearly
grasped in the approximate context.
The piecewise function $\perr(\theta)$ seems to involve six curves,
with some strategies prevailing only in small regions of $\theta$
(e.g.\ the ones flattening the ``cusp'' at $\theta/\pi \simeq 0.11$).

\paragraph*{$d=4$}
With two ebits we count five distinct regions, four of which
fit to an expression of the type:
\begin{equation}
    (1 - t) \sin\left(\frac{\theta}{2} - \phi\right)^2 + \frac{t}{2}.
\end{equation}
Around $\theta\simeq\fpi8$ and $\theta\simeq\fpi4$, where $\perr(\theta)$ crosses
the $x$ axis, we have $t=0$ and respectively $\phi=\fpi{16}$, $\phi=\fpi8$.
For both $d=3$ and $d=4$ we find attacks slightly beating the non-entangled PGM in the
region around $\theta\simeq0$; however, for this piece we could not find a
simple analytical formula.

\paragraph*{Multiple bases}
An interesting scenario is QPV$_{(n)}$, an extension of QPV$_\theta$ defined as such:
the verifiers now choose a basis $R_{\theta_b}$, where
$\theta_b$ is picked uniformly from the set
$S_n = \{\lpi[b]{2n}, \forall b \in \integersbelow{n}\}$, and send
$\ket{\psi} = R_{\theta_b}\ket{x}$ from $V_1$ and $b\in\integersbelow{n}$ from $V_2$.
The set $S_n$ is composed of $n$ equally-spaced angles in the range $[0,\fpi2)$;
for $n=2$, QPV$_{(n)}$ reduces to QPV$_{\lpi4}$.
The intuition behind the modified protocol (similar to one suggested in~\cite{KMS11})
is that only Bob can adapt his unitary $U_b$ depending on $b$, making
the constraints on Alice's $V$ tighter and tighter by increasing $n$.
Moreover, QPV$_{(n)}$ allows us to go beyond a direct application of the efficient
attacks devised in~\cite{GC20}.
The numerical optimization (\cref{fig:multibase}) indeed suggests a higher
$\perr(n)$ for large $n$ than the $\max_\theta \perr(\theta)$ for QPV$_\theta$.
The no-entanglement ($d=1$) $\perr(n)$ can be obtained by minimizing the sum of the
squares of the overlaps between a measurement angle $\tilde{\theta}$ and the
angles in $S_n$, giving:
\begin{equation}
    \min_{\tilde{\theta}} \bigg\{
        \frac1n \sum_{\theta_b \in S_n} \sin\left( \tilde{\theta} + \theta_b \right)^2
    \mkern-3mu \bigg\}
    = \frac12 \left[ 1 - \frac1n \csc \left( \frac{\pi}{2n} \right) \right].
\end{equation}

\begin{figure}
    \centering
    \includegraphics{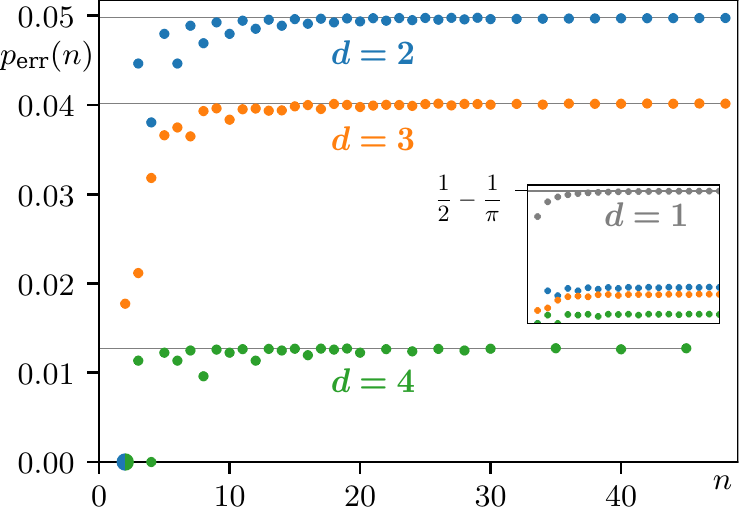}
    \caption{
        The numerically minimized $\perr(n)$ representing attacks to the
        QPV$_{(n)}$ protocol.
        For large $n$ the best attacks found are weaker than the
        ones for QPV$_\theta$.
    }
    \label{fig:multibase}
\end{figure}

\section{Conclusion}
    \label{sec:conclusion}

The family of protocols QPV$_\theta$ explored in this paper holds great promises
for a near-term implementation, due to its experimental and theoretical simplicity.
However, by exploiting its specific structure, we show how adversaries
manipulating a small amount of entanglement are able to perfectly break many angles,
finding new exact attacks and lowering the cost of previously known ones.
We find evidence about the existence of exact attacks for
$\theta$ multiples of~$\fpi{2d}$, and numerically obtain
them for $d\leq12$.
For comparison, attacks in~\cite{GC20} applied to QPV$_\theta$ (the best to date,
as far as we know) require $4n + 15$ ebits to break $\theta=\lpi{2^n}$;
our results suggest INQC protocols consuming just $n-1$ ebits for the corresponding
family of two-qubit nonlocal unitaries $U_\theta$.
Through numerical optimization of approximate attacks, we show that
adversaries manipulating two ebits can attain error probabilities as low as
$\perr \lesssim 5\cdot 10^{-3}$ through the entire $\theta$ range.
 
Some questions about QPV$_\theta$ are left open.
It would be interesting to find an explicit strategy reproducing our attacks
for all $d$.
These results could be useful in other areas,
e.g.\ for designing better gate teleportation protocols.
Moreover, we note that a variant of QPV$_{(n)}$ where the bases are chosen from
the entire Bloch sphere has interesting loss-tolerance properties~\cite{schaffpriv},
and could be a better choice for near-term implementations.
We leave this to future work.

\begin{acknowledgments}
AO et AC acknowledge financial support from ANR project ANR-16-CE39-0001 DEREC.
We also thank Alastair Key for his \LaTeX{} package
\texttt{quantikz}~\cite{kay2018tutorial} we used for the circuits in 
\cref{fig:circQPV,fig:circReduction}.
\end{acknowledgments}

\bibliographystyle{res/apsrevmod4-1}
\bibliography{qpv.bib}

\newpage
\appendix

\section{the QPV\topdf{$_\theta$}{-θ} protocol and a simple attack}
    \label{sec:qpvprot}

\subsection{Protocol description} \label{sec:protdesc}
What follows is a description of QPV$_{\theta}$, a family of protocols 
already introduced in~\cite{KMS11,LauLo11}.
In order to simplify the discussion, all parties are confined to 1D space, i.e.\ on a line; 
see~\cite{LauLo11,Unruh14} for extensions to D-dimensional space.
We further assume that the time needed to perform computations is negligible with respect
to the travel time of the signals.

Before the start of the protocol, a prover publicly claims to be at position $P$.
Two verifiers, who distrust the prover, would like to verify his claim.
They thus pick two trusted stations at position $V_1$ and $V_2$, respectively
to the left and to the right of $P$, and synchronize their clocks.
One round of the protocol then proceeds as follows (\cref{fig:diagQPVhonest}):
\begin{enumerate}
    \item 
    $V_1$ and $V_2$ agree on two random bits $x,b \in \{0,1\}$ by means of pre-shared
    randomness or through a secure classical channel.

    \item 
    $V_1$ prepares the state $\ket{\psi} = (R_{\theta})^{b}\ket{x}$, where 
    \begin{equation}
        R_\theta = \begin{pmatrix}
                       \cos\theta & -\sin\theta \\
                       \sin\theta & \cos\theta
                   \end{pmatrix}
    \end{equation}
    is a real rotation%
    \footnote{%
        We choose $\theta$ to be the polarization angle, at variance with the
        convention for a $\sigma_y$ rotation in the Bloch sphere where
        the corresponding angle would have been $\theta/2$.
    }
    defining a new encoding basis for $\ket{x}$.%
    \footnote{%
        While this might seems restrictive, for any pair of transformations
        $B_1$ and $B_2$ the verifier chooses to apply to the secret bit $\ket{x}$,
        we can always find an equivalent protocol with $B'_1 = I$ and $B'_2 = R_\theta$
        by setting $\cos(\theta) = \braket{0 | B_1^\dag B_2^{\phantom{\dag}} | 0}$,
        such that (w.l.o.g.)\ the four quantum inputs can be described by
        $b,x\in\{0,1\}$ as $\ket{\psi} = (R_\theta)^b \ket{x}$.
    }
    Then, $\ket{\psi}$ is sent towards $P$ through a public quantum channel.

    \item 
    $V_2$ sends $b$ towards $P$ through a public classical channel.
    The signals are carefully timed such that the quantum state and the classical bit
    arrive simultaneously at $P$, where the prover claims to be.
    For a moving prover, the signals are timed in his frame of reference.

    \item
    Upon receiving $\ket{\psi}$ and $b$, the prover applies $(R_\theta^{\dagger})^{b}$ to 
    $\ket{\psi}$ and measures in the computational basis, recovering $x$.
    He immediately broadcasts $x$ to $V_1$ and $V_2$.

    \item 
    The verifiers receive the results, checks their correctness
    and that the arrival timestamps of the signals
    are consistent with the minimal travel time allowed by relativity.
\end{enumerate}
The above steps are repeated for $N$ rounds.
The protocol terminates successfully if the answers to the challenges have been accepted
often enough, depending on a security parameter $\varepsilon>0$.
According to the precision of their clock, the verifiers can bound the prover's position
to a small neighborhood of $P$.

\subsection{Naïve security argument}
In order to fool the timestamp verification step, any attacker
that does \emph{not} control the neighborhood of $P$ would have to set up at least
two stations $A$ and $B$, respectively between $V_1$ and $P$ and between $V_2$ and $P$
(\cref{fig:diagQPVattack}).
When all inputs are classical, each attacker can copy its input and forward it
to the other;
then they can follow the honest prover's actions at both sites.
It becomes immediately clear that this strategy cannot work in the quantum case,
because the four possible states $\ket{\psi}$ sent by the verifier are not
in general all orthogonal to each other---except for a ``classical protocol'',
where $\theta = (0 \mod\fpi{2})$.
Thus the quantum input cannot be deterministically copied at $A$ and sent to $B$.
As it turns out, if $A$ and $B$ do not share entanglement (the \textit{No-PE model}),
the security of QPV$_\theta$ can indeed be rigorously proven~\cite{BCF+14}.

\subsection{Teleportation-based attack} \label{sec:telep_attack}
Nonetheless, for $\theta=\lpi4$ the protocol can be broken with unit
probability by a strategy involving pre-shared entangled states and the
teleportation protocol.
Already in his seminal paper~\cite{KMS11}, Kent shows that QPV$_\text{BB84}$
(analogous to QPV$_{\lpi{4}}$ within our notation) can be perfectly broken by exploiting
the commutation properties of the standard teleportation correction operators.
In this case, the basis $R_{\lpi{4}}$ chosen by the verifier has the peculiar property
that the honest prover's actions on the teleported state can be simulated at $B$
\emph{before} the end of the teleportation protocol, i.e.\ without waiting for the
usual classical information from $A$ telling $B$ which correction operator to apply
to his quantum state in order to get $\ket{\psi}$.
This bears some resemblance to how error correction works in measurement based
quantum computation (MBQC)~\cite{Raussendorf_2007}.
A way to see this is that the Pauli $X$, $Z$ and $XZ$ operators keep the sets
$\{\ket{0},\ket{1}\}$ and $\{H\ket{0}, H\ket{1}\}$ invariant.
The teleportation-based attack strategy is efficient and has been shown to be optimal
in~\cite{RG15}, consuming one ebit per round.

\section{QPV\topdf{$_\theta$}{-θ} in the INQC picture}
    \label{sec:INQC}

While our definition of the attack model is useful in the circuit picture that
we propose, an extensive part of the
literature~\cite{speelman2015instantaneous,BCF+14,BeigiKoenig11,GC20} on this topic
characterises the attacks through the (more general) INQC implementation of a
suitable unitary $U_\text{AB}$ on a bipartite quantum input $\rho_\text{AB}$
sent by the verifiers.
We establish here a mapping from our model to the INQC one.
In particular, we show how the results in~\cite{GC20} apply to ours.

During a round of QPV$_\theta$, the adversaries receive the
quantum-classical~\cite{wilde_2013} state:
\begin{equation}
    \ket{\Psi_b(x)}_\text{AB} =
        (R_\theta)^b \ket{x}_\text{A} \otimes \ket{b}_\text{B},
\end{equation}
where $x=0,1$ and $b=0,1$ with equal probability $1/4$.
If their goal was for just one of them to be able to obtain $\ket{x}$,
they could achieve it by implementing the controlled-unitary
\begin{equation}
    \tilde{U}_\theta =
        I \otimes \ket{0}\bra{0} + R_{-\theta} \otimes \ket{1}\bra{1},
\end{equation}
which leaves them with the state
\begin{equation}
    \ket{\tilde{\Psi}_b(x)}_\text{AB} = \ket{x}_\text{A} \otimes \ket{b}_\text{B}.
\end{equation}
Alice could then measure his system in the computational basis to obtain $x$.
However, at this point they have already used up the allowed round of
simultaneous communication to implement $\tilde{U}_\theta$, and Alice cannot
send $x$ to Bob.

The issue can be fixed by the addition of a CNOT gate controlled on
Alice's side:
\begin{equation} \label{eq:Utheta}
    U_\theta = \text{CNOT}_\text{AB} \, \tilde{U}_\theta
\end{equation}
leaving them with the state
\begin{equation}
    \ket{\Psi'_b(x)}_\text{AB} = \ket{x}_\text{A} \otimes \ket{x\oplus b}_\text{B}.
\end{equation}
Now Bob is also able to retrieve $x$, by measuring in the computational basis and
XORing the result with $b$ (which is also available as a classical bit).
An attack to QPV$_\theta$ can thus be equivalently described in terms of an
INQC implementation of the two-qubit unitary $U_\theta$.
An analogous argument applies to the QPV$_{(n)}$ protocol with $n$ bases defined in
\cref{sec:approxattacks}, giving a unitary $U_n$ acting on the $(2 \otimes n)$-%
dimensional state $\ket{\Psi_b(x)}_\text{AB} = (R_{\theta_b}\ket{x}) \otimes \ket{b}$
for $x\in\{0,1\}$, $b\in\integersbelow{n}$, $\theta_b \in S_n$.

The protocol defined in~\cite{GC20} gives an INQC implementation of all two-qubit
unitaries consuming a linear amount of ebits in the desired approximation accuracy.
Through the embedding defined above, all QPV$_\theta$ protocols can be attacked
in this way (but not QPV$_{(n)}$ when $n>2$).
Matching their notation, their strategy is based on the decomposition~\cite{Kraus2001}:
\begin{equation} \label{eq:kak_dec}
    U = (R_1 \otimes S_1) \Omega (R_2 \otimes S_2),
\end{equation}
where $R_i, S_i$ are single-qubit unitaries.
The matrix
\begin{equation}
    \Omega = \exp \{ i
    \left(
        \alpha\, \sigma_x \otimes \sigma_x +
        \beta\, \sigma_y \otimes \sigma_y +
        \gamma\, \sigma_z \otimes \sigma_z
    \right) \}
\end{equation}
describes the nonlocal part of $U$, and is diagonal in a basis of maximally
entangled states called \textit{magic basis}.
While in general their strategy fails with some error, they give in two special
cases a perfect implementation of $U$, provided that $\alpha, \beta, \gamma$
are all integer multiples of $\lpi{2^n}$:
\begin{itemize}
    \item
    If $n=2$, consuming 2 ebits;

    \item
    If $n>2$, consuming a finite number of ebits.
\end{itemize}
We can obtain the values of $\alpha_\theta, \beta_\theta, \gamma_\theta$ for our
$U_\theta$ through the Cartan (also known as $KAK$) decomposition~\cite{Drury_2008}.
Choosing:
\begin{equation}
    \begin{alignedat}{2}
        R_1 &= \frac{I - iZ}{\sqrt{2}}, &\qquad\quad
        S_1 &= R_{\lpi4}, \\
        R_2 &= R_{-\theta/2}, &
        S_2 &= \frac{Z - iI}{\sqrt{2}},
    \end{alignedat}
\end{equation}
we obtain a factorization of our $U_\theta$ in \cref{eq:Utheta} in the form of
\cref{eq:kak_dec}, with
\begin{align}
    \alpha_\theta = 0, &&
    \beta_\theta = \theta/2, &&
    \gamma_\theta = \pi/4.
\end{align}
For $\theta$ multiples of $\lpi{2^n}$ a direct application of their strategy
gives an exact attack consuming $4n + 15$ ebits.
The attacks we found in \cref{sec:exactattacks} require, for $n=2,3,4$,
respectively $1,2,3$ ebit(s);
moreover, we observe up to $d=12$ that an entangled qu$d$it of even dimension
is sufficient to attack $\theta$ multiples of $\lpi{2d}$, suggesting in this
case the existence of attacks for all $n$ requiring just $n-1$ ebits.
We emphasize though that the gain in ebit consumption is likely due to the
large amount of structure in the family $U_\theta$ that we consider, motivated
by the analysis of simple protocols with only one quantum input, while the attacks
in~\cite{GC20} work for all two-qubit unitaries.

\section{graphical no-go proofs for \topdf{$d=2$ and $d=3$}{d=2 and d=3} exact attacks}
    \label{sec:graphproofs}

We introduce here a representation of the $2d$-dimen\-sio\-nal Hilbert space
to which the states
\begin{equation} \tag{\ref{eq:psi_bxs}}
    \ket{\psi_b(x,s)} = V(R_{\theta}\otimes U)^b (\ket{x}\otimes\ket{s})
\end{equation}
belong for all $b,x,s$.
The visualization is loosely based on \textit{hypergraphs}, a generalization of graphs
where an edge is allowed to join any number of vertices.
Additionally, our edges are labelled---in particular, two different edges can
join the same subset of vertices.
\Cref{fig:hypergraphs} in the main text, as well as \crefrange{gpair:d21}{gpair:d32}
in the proof below, are examples of the visualization:
\begin{itemize}
    \item 
    Each vertex represents an element $\ket{u}$ of the computational basis.
    An edge joining a specific subset of vertices represents a state
    $\ket{\psi_b(x,s)}$ having that subset as its support.
    
    \item 
    Each hypergraph has $2d$ vertices and $d$ edges in the inner (outer)
    region, where $x=0$ ($x=1$).
    
    \item 
    The color of the edges encodes the different values of the index $s$.
    
    \item 
    To an attack strategy corresponds a pair of graphs, one for $b=0$ and one for $b=1$.

    \item
    Some information about the state is lost, e.g.\ the actual amplitudes
    $\braket{u | \psi_b(x,s)}$ are not represented.
\end{itemize}

We can easily characterize the inner products between the states in \cref{eq:psi_bxs}:
for all $b,x,y\in\{0,1\}$ and $s,t\in\integersbelow{d}$,
\begin{align}
    \label{eq:inner_delta}
    \braket{\psi_b(x,s)|\psi_b(y,t)}&=\delta_{xy}\delta_{st},\\
    \label{eq:inner_RU}
    \braket{\psi_0(x,s)|\psi_1(y,t)}&=\braket{x|R_{\theta}|y}\braket{s|U|t}.
\end{align}
In particular, the unitarity of $R_\theta,U,V$ implies that for fixed $b$
the vectors $\ket{\psi_b(x,s)}$ span the whole output space.
For an attack strategy to be exact it has to satisfy the DDC:
\begin{align} \tag{\ref{eq:DDC}}
    \braket{u|\psi_b(0,s)}=0 && \text{or} && \braket{u|\psi_b(1,s)}=0,
\end{align}
for all $u\in\integersbelow{2d}$, $s\in\integersbelow{d}$, $b\in\{0,1\}$.
Phrased geometrically, the states $\ket{\psi_b(0,s)}$ and $\ket{\psi_b(1,s)}$
need to have disjoint supports in the computational basis.
\Cref{eq:DDC} has another interesting consequence: either the
protocol is classical and trivially broken (i.e.\ $\theta\equiv 0\mod\fpi2$),
or the states $\ket{\psi_b(x,s)}$ have to satisfy
\begin{align} \label{eq:balanced}
    \sum_s|\braket{u|\psi_b(x,s)}|^2=\frac12, &&
    \begin{aligned}
        &\forall\, b,x\in\{0,1\},\\
        &\forall u\in\integersbelow{2d}.
    \end{aligned}
\end{align}
This last equation, proven in \cref{sec:proofbalanced}, has further implications
on the support of the states $\ket{\psi_b(x,s)}$.

The restrictions imposed by \crefrange{eq:inner_delta}{eq:balanced} can be captured
by the following necessary (but not sufficient) structure on the graphs:
\begin{enumerate}[(I)]
    \item \label[prop]{prop:disjsupport}
    Disjointness of $\ket{\psi_b(0,s)}$'s and $\ket{\psi_b(1,s)}$'s supports
    implies that any vertex that is part of an $s$-colored inner edge cannot
    \emph{also} be part of the corresponding $s$-colored outer edge.

    \item \label[prop]{prop:sum12}
    \Cref{eq:balanced}, giving the total ``probability budget'' for the inner (outer)
    edges crossing a given vertex, has several graphical implications:
    \begin{enumerate}
        \item 
        All vertices have to be part of at least one inner and one outer edge

        \item
        Vertices joined by an inner (outer) edge of length 2 cannot be part of
        other inner (outer) edges.

        \item \label[prop]{prop:atleast2}
        Each edge has to join at least two vertices
        and, due to \cref{prop:disjsupport}, cannot join more than $2d-2$ vertices.
    \end{enumerate}

    \item \label[prop]{prop:nodedges}
    Due to \cref{prop:disjsupport,prop:sum12}, no vertex can be covered by all
    inner edges or by all outer edges.

    \item \label[prop]{prop:nojustone}
    Per \cref{eq:psi_bxs},
    for fixed $b$, all states $\ket{\psi_b(x,s)}$ are orthogonal to each other.
    This forbids any two edges from having only one vertex in common.

    \item \label[prop]{prop:crossprop}
    Finally, while a bit trickier to visualize, \cref{eq:inner_RU} imposes that
    if an $s$-colored edge on the left graph does not share any vertex with one
    of the $t$-colored edges on the right graph, then all four edges of that color
    combination $(s,t)$ represent orthogonal states.
\end{enumerate}

Using \crefrange{prop:disjsupport}{prop:crossprop} we proceed to prove the following
result, already present in~\cite{LauLo11}.

\begin{theorem}
    Under the assumptions of our attack model (\cref{sec:protocol}), adversaries sharing
    a maximally entangled qubit cannot perfectly break QPV$_\theta$ unless $\theta$ is a
    multiple of $\lpi{4}$.
    A maximally entangled qutrit gives them even less power:
    they can perfectly break only classical protocols, namely $\theta$ multiple of $\lpi2$.
\end{theorem}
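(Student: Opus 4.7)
My plan is to case-analyze the possible hypergraph pairs satisfying Properties~I--V for $d=2$ and $d=3$, and combine the resulting structural constraints with the overlap formula \cref{eq:inner_RU} to force $\theta$ to take the claimed values. For $d=2$ the graph has $2d=4$ vertices and $d=2$ edges per region, and Property~II.c forces every edge to have length exactly $2d-2=2$. Both the inner and outer edges (within a fixed $b$-graph) then form a perfect matching on four vertices. Among the three perfect matchings on four points, any two distinct ones contain edges that share exactly one vertex, which Property~IV forbids within a single $b$-graph. Hence inner and outer matchings must coincide as unordered matchings, and Property~I forces the color labels to be swapped between inner and outer. This pins down the support structure of the $(x,s)$-columns of $V$.

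Writing the same constraint for the $b=1$ graph applied to $V(R_\theta\otimes U)$ yields two perfect matchings on the same four vertices. Imposing the overlap formula together with Property~V then yields a small system of equations in the entries of $R_\theta$ and $U$, which forces $\theta$ to be an integer multiple of $\lpi{4}$. For $d=3$ the graph has six vertices, three edges per region, and admissible edge lengths in $\{2,3,4\}$. The total inner-edge length lies between $6$ (each vertex in exactly one inner edge) and $12$ (each vertex in at most two inner edges, by Property~III), and similarly for outer edges. Combining the coverage bounds of Property~II, the anti-cover rule of Property~III, the ``no single shared vertex'' constraint of Property~IV, and the per-color disjointness of Property~I, the enumeration can be reduced to a short list of graph-pair skeletons. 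For each surviving skeleton, Property~V together with the overlap formula forces enough entries of $\braket{s|U|t}$ or $\braket{x|R_\theta|y}$ to vanish that $U$ must be a permutation matrix up to phases, and the induced condition on $R_\theta$ then forces $\theta\equiv 0\pmod{\lpi{2}}$.

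The hard part will be the $d=3$ enumeration: the combinatorics grows quickly, and the argument requires ruling out irregular configurations where length-three or length-four edges create partial overlaps that neither immediately violate Property~IV nor are handled trivially by Property~V. The $d=2$ case, by contrast, reduces almost entirely to identifying the three perfect matchings on four vertices and handling their pairwise interactions, so only local checks are needed to turn the graph-theoretic conclusion into the stated bound on $\theta$.
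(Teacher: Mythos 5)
Your overall strategy is the paper's own: encode the supports of the states $\ket{\psi_b(x,s)}$ as labelled hypergraphs, enumerate the pairs compatible with \crefrange{prop:disjsupport}{prop:crossprop}, and then use the factorization \cref{eq:inner_RU} to pin down $\theta$ and $U$. Your $d=2$ reduction is sound and equivalent to the paper's (perfect matchings forced by \cref{prop:atleast2,prop:sum12}, inner and outer matchings coinciding with colours swapped), and the endgame you gesture at does close: if the $b=0$ and $b=1$ matchings coincide, \cref{prop:crossprop} kills a full row of the $2\times2$ matrix $U$, and in the crossed case the sixteen factorized overlaps, each a product of two amplitudes at the unique shared vertex, force $|\cos\theta|=|\sin\theta|$, i.e.\ $\theta$ a multiple of $\lpi4$ (the paper shortcuts this last step via \cref{eq:balanced}).

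The gap is in the qutrit half, which is the harder part of the theorem and which you explicitly defer (``the hard part will be the $d=3$ enumeration''). The enumeration \emph{is} the proof there: asserting that the properties ``reduce to a short list of graph-pair skeletons'' does not establish what that list is, whereas the paper must go through six candidate inner-edge configurations, discard all but one after placing the outer edges, and then kill the two surviving pairs \cref{gpair:d31,gpair:d32}. Moreover, your closing mechanism is misstated. The whole graphical framework runs under the standing assumption $\theta\not\equiv0\pmod{\fpi2}$ — it is needed both for \cref{eq:balanced} (hence for \cref{prop:sum12,prop:nodedges}) and for \cref{prop:crossprop} to convert cross-graph support disjointness into $\braket{s|U|t}=0$ — so \cref{prop:crossprop} can never force an entry of $R_\theta$ to vanish, and the branch ``or $\braket{x|R_\theta|y}$ vanishes'' is not available. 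What actually happens in the surviving $d=3$ pairs is that an entire row of the $3\times3$ matrix $U$ is forced to zero (equivalently, too many mutually orthogonal states are crammed into a two-dimensional common support, or two states must be orthogonal across a single shared vertex), which is a flat contradiction with unitarity — not the conclusion that ``$U$ is a permutation matrix up to phases'' from which a condition on $R_\theta$ is then extracted. To complete the argument you should run the $d=3$ case as a proof by contradiction for nonclassical $\theta$, carry out the enumeration explicitly, and exhibit this contradiction in each surviving pair; as written, the qutrit statement is unproven.
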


\begin{proof}
\allowdisplaybreaks
For $d=2$, we can quickly rule out most graphs.
As a matter of fact, for either one of the graphs forming the pair describing
the attack, \cref{prop:atleast2} leaves two possibilities:
\begin{align*}
    \mhgraph[rotate=-45]{2}{{1,2}/1,{3,4}/1}{{3,4}/1,{1,2}/1} &&
    \mhgraph[rotate=-45]{2}{{1,2}/1,{3,4}/1}{{1,4}/1,{2,3}/1} 
\end{align*}
but \cref{prop:disjsupport} rules out the second one.
Up to vertex reordering, two graph pairs are possible:
\begin{align} \label[gpair]{gpair:d21}
   &\begin{tabular}{cc|cc}
       $b=0$ &\hspace{5pt} &\hspace{5pt} & $b=1$ \\
       \mhgraph[rotate=-45]{2}{{1,2}/1,{3,4}/1}{{3,4}/1,{1,2}/1} &&&
       \mhgraph[rotate=-45]{2}{{1,2}/1,{3,4}/1}{{3,4}/1,{1,2}/1} 
    \end{tabular}
 \\[6pt]      \label[gpair]{gpair:d22}
   &\begin{tabular}{cc|cc}
       $b=0$ &\hspace{5pt} &\hspace{5pt} & $b=1$ \\
       \mhgraph[rotate=-45]{2}{{1,2}/1,{3,4}/1}{{3,4}/1,{1,2}/1} &&&
       \mhgraph[rotate=45]{2}{{1,2}/1,{3,4}/1}{{3,4}/1,{1,2}/1}
    \end{tabular}
\end{align}
but \cref{prop:crossprop} on \cref{gpair:d21} implies that the four
top edges (inner blue and outer orange on both graphs) would correspond to four
orthogonal states defined on the same support of size $2$.
Thus only \cref{gpair:d22} is a viable option.
\Cref{eq:balanced} then shows that in this configuration
\begin{equation*}
    |\braket{\psi(0,x,s)|\psi(1,y,t)}| = 
    |\braket{x|R_\theta|y} \braket{s|U|t}| = \frac12
\end{equation*}
for all $x,y,s,t$.
This implies $\theta=\fpi[n]{4}$ and $U = R_{\lpi{4}}$ up to phases;
we thus recover the result in~\cite{LauLo11} that adversaries sharing an
entangled qubit can only break QPV$_{\lpi{4}}$.

For $d=3$ we start focusing only on the inner edges at fixed $b$.
We have to place three edges of length ranging from 2 to 4, while satisfying
\crefrange{prop:disjsupport}{prop:nojustone}.
We are left with six possibilities:
\begin{align*}
    \mhgraph{3}{{1,2}/1,{3,4}/1,{5,6}/1}{} &&
    \mhgraph{3}{{1,2}/1,{3,4,5}/1,{4,5,6}/2}{} &&
    \mhgraph{3}{{1,2}/1,{3,4,5}/1,{3,4,5,6}/2}{} \\[5pt]
    \mhgraph{3}{{1,2}/1,{3,4,5,6}/1,{3,4,5,6}/2}{} &&
    \mhgraph{3}{{1,2,3}/1,{2,3,4,5}/2,{4,5,6}/1}{} &&
    \mhgraph{3}{{1,2,3,4}/1,{3,4,5,6}/2,{5,6,1,2}/3}{}
\end{align*}
When placing the outer edges (subject to the same rules), we have to be careful
to respect \cref{prop:sum12,prop:nojustone}.
We are left with two non-trivial graphs:
\begin{align*}
    \mhgraph{3}{{1,2}/1,{3,4}/1,{5,6}/1}{{3,4}/1,{5,6}/1,{1,2}/1} &&
    \mhgraph{3}{{1,2,3,4}/1,{3,4,5,6}/2,{5,6,1,2}/3}{{5,6}/1,{1,2}/1,{3,4}/1}
\end{align*}
We can rule out the second case by noticing that it implies that three states
(e.g.\ inner green and orange, outer blue) are all orthogonal on a space of
dimension 2, namely the intersection of their supports.
Drawing the possible full pairs based on the first case, we obtain:
\begin{align} \label[gpair]{gpair:d31}
   &\begin{tabular}{c|c}
        $b=0$ & $b=1$ \\
        \mhgraph{3}{{1,2}/1,{3,4}/1,{5,6}/1}{{3,4}/1,{5,6}/1,{1,2}/1} &
        \mhgraph{3}{{1,2}/1,{3,4}/1,{5,6}/1}{{3,4}/1,{5,6}/1,{1,2}/1}
    \end{tabular}
 \\[6pt]      \label[gpair]{gpair:d32}
   &\begin{tabular}{c|c}
        $b=0$ & $b=1$ \\
        \mhgraph{3}{{1,2}/1,{3,4}/1,{5,6}/1}{{3,4}/1,{5,6}/1,{1,2}/1} &
        \mhgraph{3}{{2,3}/1,{4,5}/1,{6,1}/1}{{4,5}/1,{6,1}/1,{2,3}/1}
    \end{tabular}
\end{align}
but \cref{prop:crossprop} applied to \cref{gpair:d31,gpair:d32} tells us that
either we have again too many orthogonal states on a support of size two, or that two
states are orthogonal over a support intersection of size one.
In both cases we have a contradiction, so there is \emph{no exact attack} to 
QPV$_\theta$ for any nontrivial $\theta$ for adversaries sharing an entangled qutrit,
confirming the result in~\cite{LauLo11} and proving the theorem.
\end{proof}

Unfortunately, this method shows its limitations when applied to the $d=4$ case.
With the help of a program, we enumerated all graphs where
\crefrange{prop:disjsupport}{prop:nojustone} are satisfied, along with some more
refined conditions.
For fixed $b$, we were able to get the possible configurations of inner edges down to
about a thousand, and we could single out 18 of them that admit at least one compliant
set of outer edges.
However the number of valid pairs that can be produced with them is too big to handle
manually, even when using \cref{prop:crossprop}.
A more careful analysis of the DDC could give tighter rules, allowing to reduce the
possible pairs to a manageable number.

\section{proof of \topdf{\cref{eq:balanced}}{eq.~(\ref{eq:balanced})}}
    \label{sec:proofbalanced}

We now prove that when \cref{eq:DDC} (the DDC) is imposed on a non-classical
protocol, namely $\theta\not\equiv0\pmod{\fpi2}$, the states $\ket{\psi_b(x,s)}$
defined in \cref{eq:psi_bxs} satisfy the relation \eqref{eq:balanced}, recalled below
\begin{equation} \tag{\ref{eq:balanced}}
    \sum_s | \braket{u|\psi_b(x,s)}|^2 = \frac12
\end{equation}
for all $b,x\in\{0,1\}$ and $u\in\integersbelow{2d}$.

\begin{proof}
The families of vectors $\{\ket u\}$, $\{\ket{\psi_0(x,s)}\}$ and $\{\ket{\psi_1(y,t)}\}$
form three orthonormal bases of the same space of dimension $2d$.
We can expand $\ket{\psi_1(y,t)}$ in the $\{\ket{\psi_0(x,s)}\}$ basis, obtaining
$\forall y\in \{0,1\}, \forall t \in \integersbelow{d}$:
\begin{align} \label{eq:expandpsi}
    \ket{\psi_1(y,t)} \notag
        &= \sum_{x,s}{\braket{\psi_0(x,s)|\psi_1(y,t)}\ket{\psi_0(x,s)}} \\\notag
        &= \sum_{x,s}\left(\bra{x}\otimes\bra{s}\right) V^\dagger
                V (R_\theta\ket{y} \otimes U\ket{t}) \ket{\psi_0(x,s)} \\
        &= \sum_{x,s}{\braket{x|R_{\theta}|y}\braket{s|U|t}\ket{\psi_0(x,s)}},
\end{align}
where we used \cref{eq:inner_RU} in the second step.
For brevity, we define the scalar $\psi_{u,b}(x,s)\eqdef\braket{u|\psi_b(x,s)}$.
The DDC can thus be seen as imposing
\begin{align} \label{eq:proofDDC}
    \psi_{u,b}(0,s) \, \psi_{u,b}^*(1,s) = 0,
\end{align}
$\forall u\in\integersbelow{2d}, \forall s \in\integersbelow{d}$ and $\forall b\in\{0,1\}$.
Projecting \cref{eq:expandpsi} onto $\ket{u}$, for all $y\in\{0,1\}$:
\begin{align}
    & \ \psi_{u,1}(y,t) \notag \\
        & = \sum_{x,s} \braket{x|R_\theta|y} \braket{s|U|t} \psi_{u,0}(x,s) \\\notag
        & =  \sum_s \braket{s|U|t} \left[ \braket{0|R_\theta|y} \psi_{u,0}(0,s)
            + \braket{1|R_\theta|y} \psi_{u,0}(1,s) \right] ,
\end{align}
substituting $y=0$ and $y=1$:
\begin{align}
    & \ \psi_{u,1}(0,t) \label{eq:suby0} \\\notag
       & = \sum_s \braket{s|U|t} \left[ \cos(\theta) \psi_{u,0}(0,s)
            + \sin(\theta) \psi_{u,0}(1,s) \right] , \\[2pt]
    & \ \psi_{u,1}(1,t) \label{eq:suby1} \\\notag
       & = \sum_s \braket{s|U|t} \left[ \cos(\theta) \psi_{u,0}(1,s)
            - \sin(\theta) \psi_{u,0}(0,s) \right] .
\end{align}
Using the DDC~(\ref{eq:proofDDC}) for $b=1$
\begin{equation}
    \psi_{u,1}^{\phantom{x}}(0,s) \, \psi_{u,1}^*(1,s) = 0,
\end{equation}
along with \cref{eq:suby0,eq:suby1}, we obtain:
\begin{equation}
\begin{aligned}
    0 = \sum_{s,s'} \ & \braket{s|U|t} \braket{s'|U|t}^* \\[-10pt]
        \cdot & \left[ \cos(\theta) \psi_{u,0}^{\phantom{x}}(0,s)
            + \sin(\theta) \psi_{u,0}^{\phantom{x}}(1,s) \right] \\[2pt]
        \cdot & \left[ \cos(\theta) \psi_{u,0}^*(1,s')
            - \sin(\theta) \psi_{u,0}^*(0,s') \right].
\end{aligned}
\end{equation}
Summing over $t$ gives:
\begin{equation}
\begin{aligned}
    0 = \sum_{s,s'} \ &
            \Big( \sum_t \braket{s|U|t} \braket{t|U^\dagger|s} \Big) \\[-5pt]
        \cdot & \left[ \cos(\theta) \psi_{u,0}^{\phantom{x}}(0,s)
            + \sin(\theta) \psi_{u,0}^{\phantom{x}}(1,s) \right] \\[2pt]
        \cdot & \left[ \cos(\theta) \psi_{u,0}^*(1,s')
            - \sin(\theta) \psi_{u,0}^*(0,s') \right],
\end{aligned}
\end{equation}
and since $\sum_t{\ket t\!\bra t} = I$ and $\braket{s|s'} = \delta_{ss'}$,
we have
\begin{equation}
\begin{aligned}
    0 = \sum_s \ & \left[ \cos(\theta) \psi_{u,0}^{\phantom{x}}(0,s)
                + \sin(\theta) \psi_{u,0}^{\phantom{x}}(1,s) \right] \\[-6pt]
            \cdot & \left[ \cos(\theta) \psi_{u,0}^*(1,s)
                - \sin(\theta) \psi_{u,0}^*(0,s) \right].
\end{aligned}
\end{equation}
With the DDC for $b=0$, this simplifies to
\begin{equation*}
    \cos(\theta)\sin(\theta)
    \Big( \sum_s |\psi_{u,0}(1,s)|^2 - \sum_s |\psi_{u,0}(0,s)|^2 \Big) = 0 \,.
\end{equation*}

Now we make use of the assumption of non-classicality of the protocol, namely
$\theta\not\equiv 0\pmod{\fpi2}$, ensuring thus $\cos(\theta)\sin(\theta)\neq0$.
With the previous equation and the normalization
of $\ket{u}$, expressed in the $\{\ket{\psi_0(x,s)}\}$ basis:
\begin{equation}
    \sum_s |\psi_{u,0}(0,s)|^2 + \sum_s |\psi_{u,0}(1,s)|^2 = 1,
\end{equation}
it implies that
\begin{equation}
    \sum_s |\psi_{u,0}(0,s)|^2 = \sum_s |\psi_{u,0}(1,s)|^2 = \frac12 .
\end{equation}
Writing now $\ket{\psi_0(x,s)}$ in the $\{\ket{\psi_1(y,t)}\}$
basis and following the whole calculation again, we finally obtain
\begin{equation}
    \sum_s |\psi_{u,b}(x,s)|^2=\frac12
\end{equation}
for all $u\in\integersbelow{2d}$ and $b,x\in\{0,1\}$,
when $\theta\not\equiv 0\pmod{\fpi2}$.
\end{proof}

\section{numerical methods}
    \label{sec:num_methods}

\subsection{Exact attacks} \label{sec:num_methods_exact}
The DDC can be equivalently written as:
\begin{equation} \label{eq:DDC_prod}
    \braket{u|\psi_b(0,s)}\braket{\psi_b(1,s)|u}=0.
\end{equation}
We can obtain polynomial equations from \cref{eq:DDC_prod} using the
definition of the states $\ket{\psi_b(x,s)}$ in \cref{eq:psi_bxs},
\begin{equation} \tag{\ref{eq:psi_bxs}}
    \ket{\psi_b(x,s)}\eqdef V(R_{\theta}\otimes U)^b (\ket{x}\otimes\ket{s}),
\end{equation}
by writing $U_{st}$ for $\braket{s|U|t}$, $V_{u,xs}$ for
$\bra{u} V (\ket{x}\otimes\ket{s})$ and $R_{xy}$ for $\braket{x|R_\theta|y}$:
\begin{align}
    \label{eq:polyDDC0}
    0 &= V_{u,0s}^* V_{u,1s}^{\phantom{*}}\,, \\
    \label{eq:polyeq:DDC}
    0 &= \Big(\sum_{ij} V_{u,ij} R_{i0} U_{js}\Big)
         \Big(\sum_{kl} V_{u,kl}^* R_{k1}^{\phantom{*}} U_{ls}^*\Big).
\end{align}
Matrices $U$ and $V$ solving the above equations describe an attack if
and only if they are unitary; they have therefore to also satisfy:
\begin{align}
    \label{eq:Uunitarity}
    &\sum_{k=1}^{d} U_{k,i}^* U_{k,j}^{\phantom{*}} = \delta_{ij} &&
            \forall i,j \in \integersbelow{d}, i\geq j , \\
    \label{eq:Vunitarity}
    &\sum_{k=1}^{2d} V_{k,i}^* V_{k,j}^{\phantom{*}} = \delta_{ij} &&
            \forall i,j \in \integersbelow{2d}, i\geq j .  
\end{align}
Strictly speaking, these constraints are not polynomial equations as they
involve complex conjugation;
still, they can be expressed as polynomials in the real and imaginary parts of
$U$ and $V$'s entries.
In order to simplify the problem and reduce the number of variables, we restrict
our search to attacks that can be written as real orthogonal matrices.
Results on the approximate attacks in \cref{sec:approxattacks} suggest that
this is not too restrictive, as detailed in \cref{foot:ortho}.
Assuming real variables, \cref{eq:polyDDC0,eq:polyeq:DDC,eq:Uunitarity,eq:Vunitarity}
consist of
\begin{equation}
    2d^2 +2d^2 + \frac{d(d+1)}{2} + \frac{2d(2d+1)}{2} = \frac{13d^2 + 3d}{2}
\end{equation}
equations of the form $f_i(U,V)=0$ in $d^2 + (2d)^2 = 5d^2$ scalar variables,
parametrized by $\theta$.
It should be noted however that not all these constraints are independent:
for example, \cref{eq:polyDDC0} already implies the orthogonality of
$d$ columns of $V$.

While computational algebraic tools for working with symbolic polynomial
equations are available, their inherent exponential scaling
makes them challenging to apply directly to our system, for which the smallest
interesting case (real matrices, $d=4$) involves 80 variables and 174 equations.
For example, techniques involving \textit{sum of squares} (SOS)
proofs~\cite{parrilo2003semidefinite} construct a hierarchy of SDPs such that
any feasible point provides a certificate of unsolvability of
the system (and is guaranteed to be found at some level of the hierarchy).
It should be noted that such approaches can be unsuccessful when applied \textit{as is},
and may prove more effective by taking some of our problem's symmetries into account,
e.g.\ the freedom in assigning an order to the states labeled by $x,s$ and
to the basis elements $\{\ket{u}\}$.

Even if the system looks heavily overdetermined, we know it has at least a trivial
solution for all $d$, namely when the protocol is classical ($\theta = 0\mod\fpi2$).
In the following, we will tacitly ignore those.
In order to look for other solutions, we define:
\begin{equation}\label{eq:sqsum}
    F = \sum_{i} f_i^2,
\end{equation}
i.e.\ the sum of the squares of all polynomials.
The zeros of the function in \cref{eq:sqsum} are also simultaneous zeros of all
the polynomials $f_i$.
We can minimize $F$ with a numerical method;
if we find zero as minimum, we have found an exact attack.
In order to look for zeros of $F$, we leveraged a nonlinear least-squares method
provided by the Python library SciPy~\cite{2020SciPy-NMeth}.
We could find many new solutions, up to $d=12$:
the results are collected in \cref{tab:attacksfound} of \cref{sec:exactattacks}.

\subsection{Explicit solutions} \label{sec:explicit_sols}

The following block matrices $U$ and $V$ are examples of explicit exact attacks
we found via inspection of the results of the numerical optimization.

For $d=4$, $\theta = \lpi8$, we provide two non-trivially equivalent attacks:
\begin{align}
    V &= \frac12
    \begin{pmatrix}
         X  &  I  & -Z  & ZX  \\
        ZX  &  X  &  I  &  Z  \\
         X  & -I  & -Z  &-ZX  \\
        ZX  & -X  &  I  & -Z
    \end{pmatrix}
    \\
    U &= \frac{1}{\sqrt2}
    \begin{pmatrix}
        R_{-\lpi8}  &  R_{\lpi8}\, Z \\
        Z R_{\lpi8} &  R_{-\lpi8}
    \end{pmatrix},
\end{align}
and
\begin{align}
    V &= \frac{1}{\sqrt2}
    \begin{pmatrix}
        XHX &  0  &  0  &  I  \\
         0  &-XHX & -I  &  0  \\
        ZX  &  0  &  0  &  H  \\
         0  & -ZX & -H  &  0
    \end{pmatrix}
    \\
    U &= \frac{1}{\sqrt2}
    \begin{pmatrix}
        R_{\lpi8}  &  R_{-\lpi8}\, Z \\
        Z R_{-\lpi8} &  R_{\lpi8}
    \end{pmatrix}.
\end{align}
For $d=6$, $\theta = \lpi{12}$, we found:
\begin{align}
    V &= \frac12
    \begin{pmatrix}
          I \otimes \sqrt{2} R_{\lpi{6}}  & \begin{matrix} 0 \\ 0 \end{matrix}
        & X \otimes \sqrt{2} R_{-\lpi{3}} & \begin{matrix} 0 \\ 0 \end{matrix} \\
           H \otimes I  & \begin{matrix} -Z \\ X \end{matrix}
        & -ZH \otimes X & \begin{matrix} -X \\ Z \end{matrix} \\
           H \otimes I  & \begin{matrix} Z \\ -X \end{matrix}
        & -ZH \otimes X & \begin{matrix} X \\ -Z \end{matrix}
    \end{pmatrix}
    \\
    U &=
    \begin{pmatrix}
        A & -\frac12 ZH & \frac1{3-\sqrt3} ZX \\
        B &  \frac12 ZH & \frac1{3+\sqrt3} ZX \\
        \frac1{\sqrt6} R_{\lpi{12}} & \frac1{\sqrt2} R_{\lpi{12}} & \frac1{\sqrt3} ZH
    \end{pmatrix},
\end{align}
where
\begin{align}
    A &= \frac{2-\sqrt{3}}{\sqrt{24}} ZX - \frac1{2\sqrt{2}} I, \\
    B &= \frac1{2\sqrt{2}} XZ - \frac{2+\sqrt{3}}{\sqrt{24}} I.
\end{align}
The above is a special case of a continuum of solutions with one real degree of freedom.

\subsection{Approximate attacks} \label{sec:num_methods_approx}
Using the notation of \cref{sec:circuitrep}, we have that $\forall x,b$ a measurement
result of $s,u$ occurs with probability
\begin{equation}
    p(x,b,s,u) = | \braket{u | \psi_b(x,s)} |^2 \,p(x)\, p(b)\, p(s).
\end{equation}
Alice and Bob's best guess for the value of $x$ is thus
\begin{equation}
    p_\text{succ}(b,s,u) = \max_x{\,p(x,b,s,u)},
\end{equation}
their probability of error being $\perr = 1-p_\text{succ}$.
As our protocol only involves qubits, namely $x \in \{0,1\}$, we have
\begin{equation}
    \perr(b,s,u) = \min\{p(0,b,s,u), p(1,b,s,u)\}.
\end{equation}
The overall error probability for an attack strategy can be obtained by summing over
$b,s,u$ and remembering that $x,b$ and $s$ are uniformly distributed:
\begin{equation} \label{eq:p_err2}
    \perr = \frac{1}{2\cdot 2\cdot d} \sum_{b,s,u}
        \min\big\{ |\braket{u|\psi_b(0,s)}|^2 , |\braket{u|\psi_b(1,s)}|^2 \big\}.
\end{equation}
Imposing the DDC (eq.~\ref{eq:DDC}) we have $\perr = 0$, as expected of exact
attacks.

From the point of view of the numerical optimization, $\perr$ is a function
of $\theta$ and of the (unitary%
\footnote{\label{foot:ortho}%
    At variance with the exact attacks in \cref{sec:exactattacks}, we have to
    explore the whole (complex) unitary space for us to obtain sensible bounds
    over a range of parameters, as only looking to orthogonal matrices
    makes little sense from a security standpoint.
    Nonetheless, when restricting to the (much faster to optimize) orthogonal group
    we obtain the exact same results and curves;
    this may be due to symmetries in our attack model---for example, $(U,V)$ and
    $(U^*,V^*)$ are both attacks with the same $\perr$.
})
matrices $(U,V)$ defining an attack.
We seek to minimize it over all attack strategies at fixed $d$:
\begin{equation}
    \perr(\theta) = \min_{U,V}\, \perr(U,V,\theta).
\end{equation}
    Symmetries allow to restrict the relevant values of $\theta$ to $[0,\fpi{4}]$,
    through the
    relations $R_{\fpi2-\theta} = X R_\theta Z$ and $R_{-\theta} = X R_\theta X$,
    along with similar ones for the other quadrants.
    Extra $X$ and $Z$ are either absorbed into $V$ or taken into account
    by the adversaries by flipping the bit $x$.
At variance with the exact attacks, we want to obtain evidence about the
\emph{global} minimum of the continuous function~\eqref{eq:p_err2}.
Scalable (non-convex) numerical methods can only provide local minima~\cite{PARDALOS198833},
but we can repeat the optimization with thousands of uniformely sampled starting points,
keeping the best optimum (a strategy sometimes referred to as \textit{multistart}).
The shape of the search space can heavily affect the effectiveness of this method,
so we carry out the optimization in three different ways:
\begin{itemize}
    \item
    A constrained sparse interior point method~\cite{wachter2006implementation} (IPOPT),
    by imposing the unitarity constraints in \cref{eq:Uunitarity,eq:Vunitarity}.

    \item
    An unconstrained quasi-Newton method~\cite{BLNZ95} (L-BFGS), by parametrizing
    $U$ and $V$ with the skew-hermitian matrices $A_U$ and $A_V$ through either:
    \begin{itemize}
        \item
        the Cayley transform~\cite{Cayley1846,zhu2017riemannian}
            \begin{align*}
                \begin{split}
                    U &= (I + A_U)^{-1}(I - A_U), \\
                    V &= (I + A_V)^{-1}(I - A_V),
                \end{split}
            \end{align*}
        (with some care about the exceptional points)

        \item
        the exponential map $U = e^{A_U}, V = e^{A_V}$.
    \end{itemize}
\end{itemize}
When possible, the analytical gradient of $\perr$ with respect to the optimization
variables is obtained in order to speed up the computation.
The three methods give comparable results, requiring for example between $10^4$ and $10^5$
starting points for $d=4$ in order to converge to the same optimum.
The results of the minimization are presented in \cref{sec:approxattacks}.

\end{document}